\title{A Subexponential View of Domains in Session Types}
\author{Daniele Nantes
\institute{MAT-UnB\\ Brasilia, Brazil}
\email{dnantes@unb.br}
\and
Carlos Olarte
\institute{ECT-UFRN\\Natal, Brazil\\ LIPN, Universit\'e Sorbonne Paris Nord\\Villetaneuse, France
}
\email{olarte@lipn.univ-paris13.fr}
\and
Daniel Ventura
\institute{INF-UFG\\ Goi\^ania, Brazil}
\email{ventura@ufg.br}
}
\begin{document}
\maketitle

\begin{abstract}

Linear logic (LL) has inspired the design of many computational systems, offering  reasoning techniques  built on top of its meta-theory. Since its
inception, several connections between concurrent systems and LL have emerged
from different perspectives.  In the last decade, the seminal work of Caires
and Pfenning showed that formulas in LL can be interpreted as session types and
processes in the $\pi$-calculus as proof terms.  This leads to a Curry-Howard
interpretation where  proof reductions in the cut-elimination
procedure correspond to process reductions/interactions.  The subexponentials
in LL have also played an important role in concurrent systems since they can
be interpreted in different ways, including timed, spatial and even epistemic
modalities in distributed systems. In this paper we address the question: What
is the meaning  of the subexponentials from the point of view of a session type
interpretation? Our answer is a $\pi$-like process calculus  where agents
reside in locations/sites and they make it explicit how the communication among
the different sites should happen. The design of this language relies
completely on the proof theory of the subexponentials in LL,  thus extending
the Caires-Pfenning interpretation in an elegant way. 

\end{abstract}

\section{Introduction}\label{sec:intro}

One of the most profound connections 
between proof theory  and computation is the  Curry-Howard isomorphism (also known as \emph{formulas-as-types}) where (functional) programs correspond to  proofs in intuitionistic logic, and reductions/computations in one system correspond
to the process of normalization in the other \cite{Sorensen06}. Several correspondences with formal systems in a Curry-Howard fashion has been considered ever since, including classical logic {\cite{DBLP:conf/popl/Griffin90,DBLP:conf/lpar/Parigot92,DBLP:journals/lmcs/KesnerV19}}, intuitionistic modal logic {\cite{DBLP:journals/jacm/DaviesP01,DBLP:journals/tocl/NanevskiPP08}} and intuitionistic linear logic {\cite{DBLP:journals/tcs/Abramsky93,DBLP:conf/tlca/BentonBPH93}}, to name a few. 

Inspired by the \emph{proofs-as-processes} paradigm {\cite{DBLP:journals/tcs/Abramsky94}}, session types were proposed as a type foundation of communicating processes {\cite{DBLP:conf/esop/HondaVK98}} with a wide range of applications {\cite{DBLP:conf/wsfm/Dezani-Ciancaglinid09,DBLP:journals/csur/HuttelLVCCDMPRT16}}.
The seminal work of Caires and Pfenning \cite{DBLP:conf/concur/CairesP10,DBLP:journals/mscs/CairesPT16}
establishes a  correspondence between Girard's intuitionistic  linear logic \cite{girard87tcs} and   the session-typed $\pi$-calculus, a process calculus with mobility \cite{DBLP:journals/iandc/MilnerPW92a}. This correspondence was  recently
extended in  \cite{DBLP:conf/concur/CairesPPT19}, dealing with a notion of domain/location not featured in  session type systems. 
 The present work proposes an alternative interpretation 
  to obtain a domain-aware session type system. For that, we build on a different logical system and, we design a language of processes  with explicit annotations  for locations.
  Let us elaborate more on these two distinguished characteristics of our approach.

\noindent\textbf{Linear logic and variants.} Linear logic  (LL) \cite{girard87tcs} is a substructural  logic that can be seen as a refinement of classical and intuitionistic logics, integrating the dualities of the former with many of the constructive properties of the latter.
In this logic, formulas are seen as resources that are consumed when used. For instance, the linear implication 
$A \limp B$ specifies that, in order to produce $B$, one copy of $A$ must be consumed. In order to specify that a formula $A$ can be used unboundedly  many times, as in classical logic, the formula $A$ must be marked with the connective $!$, called exponential (or bang). 

Apart from substructural behaviors (being linear or unbounded), it is not
natural to specify some other modal behaviors directly   in LL  \cite{DBLP:conf/lpar/LellmannOP17}. For this reason,  recent developments in  LL have brought new systems  to specify multimodalities in it. 
One of such systems
is Hybrid Linear Logic (HyLL) \cite{DBLP:conf/types/DespeyrouxC13,DBLP:journals/mscs/ChaudhuriDOP19} where 
LL formulas are attached to worlds, representing  constraints or modalities.
We shall elaborate more on HyLL in Section \ref{sec:discussion}, since the domain-aware type system in \cite{DBLP:conf/concur/CairesPPT19} is based on it. 

In this paper, the base logic is   Subexponential Linear Logic (\sell) \cite{danos93kgc,DBLP:conf/ppdp/NigamM09}. In \sell, instead of having 
a unique exponential $!$, we may have 
infinitely many \emph{colored} subexponentials of the form $\nbang{a}$. This kind of modalities have been extensively used 
to 
interpret  different concurrent languages featuring behaviors
as spatial, epistemic and temporal modalities \cite{DBLP:journals/tcs/OlartePN15}. For that, 
following the 
\emph{computation-as-proof-search}
interpretation \cite{miller87lics}, 
it has been shown that 
(concurrent/modal) programs correspond to \sell\ formulas and their execution to proof search.
  However, so far, we do not have yet a \emph{formulas-as-types} interpretation for \sell, and this is one of the main contributions of this paper.

\noindent\textbf{Session typed $\pi$-calculi.}
 Session types is a formalism to reason about  processes,   providing a typing discipline for concurrent programming based on process algebras such as the $\pi$-calculus~\cite{DBLP:conf/parle/TakeuchiHK94,DBLP:conf/esop/HondaVK98}. They have been deeply investigated  in several contexts, as, for instance: detection of subtle errors in implementations;  designing and analysis of choreographies between distributed systems as in multiparty session types;   expressiveness  power to model communication properties such as deadlock-freedom, abortable computations,  non-determinism, security, etc (see~\cite{DBLP:conf/wsfm/Dezani-Ciancaglinid09, DBLP:journals/mscs/CapecchiCD16, DBLP:conf/sfm/CoppoDPY15,DBLP:conf/forte/NantesP18}).
The work of Caires, Pfenning and Toninho~\cite{DBLP:journals/mscs/CairesPT16}  proposes a connection between a type system for the synchronous $\pi$-calculus and intuitionistic linear logic that provides a correspondence between cut elimination steps and process reductions, thus extending the Curry-Howard isomorphism to linear propositions as session types. Since then, other extensions based on variants of the $\pi$-calculus were proposed, e.g., for the asynchronous $\pi$-calculus~\cite{deyoung:LIPIcs:2012:3675} and  for the $\pi$-calculus with constructs for non-determinism and abortable behavior~\cite{DBLP:conf/esop/CairesP17}.

 \noindent\textbf{Contributions.} Inspired by a recent connection between session typed processes and  HyLL~\cite{DBLP:conf/concur/CairesPPT19}, we propose a novel  interpretation of   \sell\ via a session typed $\pi$-calculus with explicit domains. The interest of this extension is justified by the expressive power of linear logic and its two extensions \sell\ and HyLL \cite{DBLP:journals/mscs/ChaudhuriDOP19}:  whereas HyLL can be encoded in (vanilla) linear logic, the subexponentials in 
\sell\ cannot be directly encoded in LL 
in its standard presentation \cite{DBLP:conf/lpar/LellmannOP17}. 

Differently from~\cite{DBLP:conf/concur/CairesPPT19}, our processes are decorated with their current locations, and at each step, the domain/location of each end-point of the communication is explicit. 
Interaction may happen from different locations and, if needed, a process can request to migrate to a common secure location. This gives an extra  expressiveness to the language, allowing for reasoning about domains, migration and accessibility at the level of processes.
We thus propose a session type discipline for a domain-aware synchronous $\pi$-calculus with mobility and choice, that provides  yet another extension of the Curry-Howard interpretation. 
As we shall see, we interpret subexponentials as domains, which allows us to express and enforce where the processes reside and where communication 
must happen, respecting the accessibility constraints embedded in \sell's    subexponential preorder. We prove that  
cut-reductions correspond to process reductions (Th. \ref{th:cut-elim}). Moreover,  the proof of the identity expansion theorem (Th. \ref{th:id-exp}) reveals the existence of  process mediators between the different locations of the system. As a corollary of the cut elimination theorem, we prove other properties of the system as type preservation and global progress. 

\noindent\textbf{Organization.}  Section~\ref{sec:preliminar} presents the
basic notions  regarding LL and \sell\  that are necessary for understanding
this work. In Section~\ref{sec:distri-py} we introduce the language of
processes that is adopted, as well as the associated structural congruence
identities and operational semantics. Section~\ref{sec:curry-howard} presents
the proposed Curry-Howard interpretation with subexponentials as domains. We
prove relevant properties  as cut-elimination and identity expansion, and
provide   examples to illustrate our framework.  Section~\ref{sec:discussion}
concludes with a discussion about the results presented and compares our
approach with the one in \cite{DBLP:conf/concur/CairesPPT19}. 
\section{Preliminaries}\label{sec:preliminar} 

\newcommand{\varp}{\mathtt{p}}

In this section  we review some of the  basic proof theory of linear logic (see \cite{troelstra92csli} for more details) and its extension with subexponentials \cite{danos93kgc}.

Intuitionistic Linear logic (ILL) ~\cite{girard87tcs} is a resource conscious logic, in the sense that formulas are consumed when used during proofs, unless they are marked with the exponential $!$. Formulas marked with $!$ behave classically, i.e., they can be contracted (duplicated) and weakened (erased) 
 during proofs.

\begin{figure}
\centering\resizebox{.65\textwidth}{!}{$
\begin{array}{c}
 \infer[\tensor_L]{\Gamma, F \tensor H \vdash G}
{\Gamma, F, H \vdash G} 
\quad 
\infer[\tensor_R]{\Gamma_1, \Gamma_2 \vdash F \tensor H}
{\Gamma_1 \vdash F & \Gamma_2 \vdash H}
\qquad
 \infer[\with_{L_i}]{\Gamma, F_1 \with F_2 \vdash G}
{\Gamma, F_i\vdash G} 
\quad 
\infer[\with_R]{\Gamma \vdash F \with H}
{\Gamma \vdash F & \Gamma \vdash H}
\\\\
\infer[\lolli_L]{\Gamma_1, \Gamma_2, F \lolli H \vdash G}
{\Gamma_1 \vdash F & \Gamma_2, H \vdash G}
\quad 
\infer[\lolli_R]{\Gamma \vdash F \lolli H}{\Gamma, F \vdash H}
\qquad
\infer[\plus_L]{\Gamma, F \plus H \vdash G}
{\Gamma, F \vdash G & \Gamma, H \vdash G}
\quad 
\infer[\plus_{R_i}]{\Gamma \vdash F_1 \plus F_2}{\Gamma \vdash F_i}
\\\\
\infer[\one_L]{\Gamma, \one \vdash G}
{\Gamma \vdash G}
\quad 
\infer[\one_R]{ \cdot \vdash \one}{}
\quad 
\infer[\zero_L]{ \Gamma,\zero\vdash G}{}
\quad 
\infer[\top_R]{ \Gamma\vdash \top}{}
\quad
 \infer[I]{\varp \vdash \varp}{} 
\qquad 
\end{array}
$
}

\caption{Propositional fragment of iMALL (intuitionistic linear
logic without exponentials). }
\label{fig:ll}
\vspace{-2mm}
\end{figure}

Formulas in ILL  (without exponentials) are built from the grammar below: \\
$$
 F ::= \varp \mid \zero \mid \one \mid \top \mid  F_1 \tensor F_2 \mid F_1 \oplus F_2 \mid F_1 
 \lolli F_2 \mid F_1 \with F_2 
$$

\noindent where $\varp$ is an atomic proposition. 
The connectives are: 
the \emph{multiplicative} versions  of true (\one), 
conjunction $\otimes$ and implication $\limp$; and the \emph{additive} 
versions of conjunction $\with$,
disjunction $\oplus$, false $\zero$
and true $\top$. 

The proof rules are in Figure~\ref{fig:ll}.
Note that the multiplicative rules split the  context ($\otimes_R$, $\limp_L$) while the additive rules share the context in the premises ($\with_R$, $\oplus_L$). 

Contraction and weakening of formulas  are controlled by the 
connective $\bang$, called bang, whose inference rules are: 

{{$\small
\qquad\qquad \qquad\begin{array}{c}
  \infer[\bang_L]{\Gamma, \bang F \vdash G}{\Gamma, F \vdash G}
  \qquad
  \infer[\bang_R]{\bang \Gamma \vdash \bang G}{\bang \Gamma \vdash G}
  \qquad
    \infer[W]{\Gamma, \bang F \vdash G}{\Gamma \vdash G}
  \qquad 
  \infer[C]{\Gamma, \bang F \vdash G}{\Gamma, \bang F, \bang F \vdash G}
 \end{array}
$}}

Notice that one is  allowed to introduce a $\bang$ on the right ($!_R$) only if  all formulas 
in the context are  marked with a bang 
($!\Gamma = \{!F \mid F \in \Gamma\}$). 
On the other side, the left rule does not impose any restriction. 
  The rule  $\bang_R$ is commonly called
\emph{promotion}, while the rule $\bang_L$ is  called \emph{dereliction}.

\subsection{Subexponentials}

Intuitionistic linear logic  with subexponentials (\sell) shares with ILL   all connectives except the exponentials:  Instead of having a
single exponential $\bang$, \sell\ may contain as many colored/labelled bangs (e.g., $\nbang{{a}}$)
as needed.  Such labelled bangs are called \emph{subexponentials}
\cite{danos93kgc,DBLP:conf/ppdp/NigamM09}.

The proof system for intuitionistic \sell\ is parametric on  a
\emph{subexponential signature} $\Sigma = \tup{I, \preceq, U}$, where $I$ is a
set of labels, $U \subseteq I$ is a set specifying 
the \emph{unbounded}  subexponentials, i.e. those that  allow weakening and contraction (the subexponentials in  $I \setminus U$ are called \emph{linear}), and $\preceq$ is a pre-order among the elements of
$I$. 
For cut-elimination \cite{DBLP:conf/ppdp/NigamM09}, it is assumed that $U$ is upwardly closed w.r.t. $\preceq$, \ie, if $u \in U$ and $u
\preceq u'$, then $u' \in U$. 
We shall use 
$s,a,b,c\cdots$ to range over elements in $I$ and  $u,u',w,w',\cdots$ to denote unbounded subexponentials. 

The system $\sell$ is constructed by extending the rules of ILL in Fig. \ref{fig:ll} with the following ones. 
For each $a \in I$, we add the   rules
corresponding to dereliction and promotion:

\[\small
\infer[\nbang{a}_L]{\Gamma, \nbang{a} F \vdash G} {\Gamma, F \vdash
G} \qquad   \quad
\infer[\nbang{a}_R]{\nbang{a_1} F_1, \ldots \nbang{a_n} F_n
\vdash \nbang{a} G}{\nbang{a_1} F_1, \ldots \nbang{a_n} F_n \vdash  G \qquad a \preceq  {a_i}} 
\]

\noindent
Note the side condition  in  $\nbang{a}_R$:  One can only introduce a $\nbang{a}$   if
the  formulas in the left-context are all marked with indices  greater or equal than $a$. Hence,  
if $b$ is accessible from $a$ ($a \preceq b$),   we can  show that
$\nbang{b}F \vdash \nbang{a}F$. However, if $a$ and $b$ are
not related, it is not possible to prove $\nbang{a}F \equiv
\nbang{b}F$  for an
arbitrary formula $F$, where $F\equiv G$ denotes logical equivalence:  $(F \limp G) \with (G \limp F)$.  This means that the subexponentials are not canonical \cite{danos93kgc,DBLP:conf/ppdp/NigamM09}
and the subexponential signature  determines the provability relation in \sell. 
This fact has been used to interpret the subexponentials in different ways, 
ranging from (nested) spaces in distributed systems, time-units, the epistemic state of agents and preferences \cite{DBLP:journals/tcs/OlartePN15,DBLP:journals/tplp/PimentelON14,DBLP:journals/tcs/NigamOP17}.

\noindent\textbf{Dyadic System.}
  In this paper  we shall use the so-called dyadic system \cite{DBLP:journals/logcom/Andreoli92} that 
results from the incorporation of the structural rules of weakening and contraction  into the introduction rules. The key observation is that, on the left of the sequent, 
a formula $\nbang{u}{F}$ can be contracted and weakened whenever $u$ is unbounded (i.e., $u\in U$). This is reflected
  in the syntax of dyadic sequents of the form $\Gamma:\Psi \vdash G$ which have two contexts, namely, a subexponential context 
  containing formulas marked with a bang 
  ($\Gamma = \{\nbang{a_1}F_1,...,\nbang{a_n}{F_n}\}$) 
  and
  the linear context ($\Psi$). As an example, consider the  following  rules: \footnote{The reader familiar with dyadic systems for linear logic may wonder why we keep the bang in the classical context. The reason is to simplify the notation of the type system introduced in the next section.}\\
\begin{center}  
$\small
  \infer[\nbang{a}{}_L]{\dseq{\Gamma,\nbang{a}{F}}{\Psi}{G}}
  {\dseq{\Gamma}{\Psi,F}{G}}
  \qquad
  \infer[\textit{copy}]{\dseq{\Gamma,\nbang{u}{F}}{\Psi}{G}}
  {\dseq{\Gamma,\nbang{u}{F}}{\Psi,F}{G}}
  \qquad
  \infer[\textit{store}]{\dseq{\Gamma}{\Psi, \nbang{a}{F}}{G}}{\dseq{\Gamma,\nbang{a}{F}}{\Psi}{G}}
  \qquad
  \infer[!_R]{\dseq{\Gamma}{\cdot}{\nbang{a}G}}{
  \dseq{\Gamma^{\succeq a}}{\cdot}{G}
  }
 $
\end{center}
  In $\nbang{a}{}_L$, $a\not\in U$ (i.e., $a$ is a linear
  subexponential); in $\textit{copy}$, $u\in U$ (and $F$ is
  contracted);
The \textit{store} rule moves marked formulas to the subexponential context. 
The promotion rule $!_R$   has the proviso  
that there are no formulas marked with  $\nbang{c}{}$ if  $c\not\in U$ or $a\not\preceq c$.
Note that this must be the case since linear subexponentials cannot be weakened (nor contracted). 
The notation $\Gamma^{\succeq a}$ will be used to denote the following 
subset of $\Gamma$:  
  $\{\nbang{b}{F} \in \Gamma \mid a\preceq b \}$ and it is defined if
  for all linear $c$,  
  $\nbang{c}{F} \in \Gamma$ implies  $a \preceq c$. 
Therefore,   in $\Gamma^{\succeq a}$, all the unbounded subexponentials not related with $a$ are weakened, 
and the remaining subexponentials must be greater than $a$, 
thus reflecting the behavior of the promotion rule.

\subsection{Subexponential Quantifiers}
The universal ($\forallLoc$) 
and existential ($\existsLoc$) quantifiers
for subexponentials \cite{DBLP:journals/tcs/NigamOP17}  require  
a  typing information
to guarantee that the cut rule is admissible (Theorem \ref{th:cut} below). 
More precisely, 
given a subexponential signature
$\Sigma=\tup{I, \preceq, U}$, the judgment $\Typeloc{b}{a}$ is true whenever  $b \preceq a$. Hence, we obtain the following set of typed 
\emph{subexponential constants}:
$
\aprec_\Sigma=\{\Typeloc{b}{a}\,\mid\,a,b\in I, b\preceq a\}
$. 
Similar to the universal quantifier $\forall$, that introduces \emph{eigenvariables}
to the signature, the rule for  universal quantification
on the right (and the rule for existential on the left)
introduces a (fresh) \emph{subexponential variable} $\Typeloc{\alpha}{a}$, where
$a$ is a subexponential constant, \ie, $a \in I$. Thus, \sellU\ sequents have the form 
$\aprec~:~ \Gamma~:~ \Psi\vdash G$, 
where 
$
\aprec = 
\aprec_\Sigma\cup\{\Typeloc{\alpha_1}{a_1}, \ldots, \Typeloc{\alpha_n}{a_n}\}$,   
$\{\alpha_1, \ldots, \alpha_n\}$
is a  set of subexponential variables
  and $\{a_1, \ldots, a_n\} 
\subseteq I$ is a set of subexponential constants.   We shall use $I(X)$ to denote the set of constants and subexponential variables. 

Formulas in \sellU\ 
are as before with the addition of   $
   \forallLoc \Typeloc{\alpha}{s}. F$ and $\existsLoc \Typeloc{\alpha}{s}. F
$
where $\alpha: s$ is a (typed) subexponential variable,  and $s$ is 
a subexponential index (i.e., $s \in I(X)$).

The introduction rules for the subexponential quantifiers look similar to those
introducing the first-order quantifiers (see 
Fig. \ref{fig:sellU}). 
Intuitively, subexponential variables play a similar role as eigenvariables. 
The generic variable $\Typeloc{\alpha}{a}$ represents \emph{any subexponential --
constant or variable --} that is in the ideal of $a$ (i.e., in the set $\{c \mid c \preceq a\}$). Hence it can be substituted 
by any  subexponential $s$ of type $b$, with 
$b\preceq a$.

The other rules of the (dyadic) system for \sellU\  in Figure \ref{fig:sellU} are similar to those in Fig. \ref{fig:ll}. Note that the context with unbounded subexponentials ($\Upsilon$)
is duplicated while those with linear subexponentials ($\Delta$) are split in the premises of multiplicative rules. 
When the context $\mathcal{A}$ is unimportant or it can be inferred from the context, we will omit it. 


\begin{figure}
\resizebox{\textwidth}{!}
{$
\begin{array}{c}
 \infer[\tensor_L]{\dseq{\Gamma}{\Psi, F\otimes F'}{G}}{\dseq{\Gamma}{\Psi, F, F'}{G}}
\quad
\infer[\tensor_R]{
\dseq{\Upsilon,\Delta, \Delta' }{\Psi, \Psi'}{F \tensor F'}}
{
\dseq{\Upsilon,\Delta}{\Psi}{F}
&
\dseq{\Upsilon,\Delta'}{\Psi'}{F'}
}
\qquad
 \infer[\with_{L_i}]{
 \dseq{\Gamma}{\Psi, F_1\with F_2}{G}}
 {
 \dseq{\Gamma}{\Psi, F_i}{G}
 }
\quad
\infer[\with_R]{
\dseq{\Gamma}{\Psi}{F\with F'}}
{
\dseq{\Gamma}{\Psi}{F}
&
\dseq{\Gamma}{\Psi}{F'}
}
\\\\
\infer[\lolli_L]{
\dseq{\Upsilon, \Delta, \Delta'}{\Psi, \Psi', F\limp F'}{G}}
{
\dseq{\Upsilon, \Delta}{\Psi }{F}
&
\dseq{\Upsilon, \Delta'}{\Psi', F' }{G}
}
\quad
\infer[\lolli_R]{
\dseq{\Gamma}{\Psi}{F \limp F'}
}
{
\dseq{\Gamma}{\Psi,F}{F'}
}
\qquad
\infer[\plus_L]{
\dseq{\Gamma}{\Psi, F\oplus F'}{G}
}
{
\dseq{\Gamma}{\Psi, F}{G}
&
\dseq{\Gamma}{\Psi, F'}{G}
}
\quad
\infer[\plus_{R_i}]{
\dseq{\Gamma}{\Psi}{F_1 \oplus F_2}
}
{
\dseq{\Gamma}{\Psi}{F_i}
}
\\\\
\infer[\forallLoc_L]{
\dseq{\Gamma}{\Psi, \forallLoc\Typeloc{\alpha}{a}.F}{G}
}{
\dseq{\Gamma}{\Psi, F[l/\alpha]}{G}
}
\qquad
\infer[\forallLoc_R]{
\dseqA{\Gamma}{\Psi}{\forallLoc\Typeloc{\alpha}{a}.F}
}
{
\dseqAvar{(\mathcal{A}, \Typeloc{\alpha_e}{a})}{\Gamma}{\Psi}{F[\alpha_e/\alpha]}
}
\qquad
\infer[\existsLoc_L]{
\dseqA{\Gamma}{\Psi,\existsLoc\Typeloc{\alpha}{a}.F}{G}
}
{
\dseqAvar{(\mathcal{A},\Typeloc{\alpha_e}{a})}{\Gamma}{\Psi,F[\alpha_e/\alpha]}{G}
}
\qquad
\infer[\existsLoc_R]{
\dseq{\Gamma}{\Psi}{\existsLoc\Typeloc{\alpha}{a}.G}
}{
\dseq{\Gamma}{\Psi}{G[l/\alpha]}
}
\\\\
  \infer[!_L]{\dseq{\Gamma,\nbang{a}{F}}{\Psi}{G}}
  {\dseq{\Gamma}{\Psi,F}{G}}
  \qquad
  \infer[copy]{\dseq{\Gamma,\nbang{u}{F}}{\Psi}{G}}
  {\dseq{\Gamma,\nbang{u}{F}}{\Psi,F}{G}}
\qquad
  \infer[store]{\dseq{\Gamma}{\Psi, \nbang{a}{F}}{G}}{\dseq{\Gamma,\nbang{a}{F}}{\Psi}{G}}
  \qquad
  \infer[!_R]{\dseq{\Gamma}{\cdot}{\nbang{a}G}}{
  \dseq{\Gamma^{\succeq a}}{\cdot}{G}
  }
\\\\
\infer[\one_L]{
\dseq{\Gamma}{\Psi,\one}{G}
}
{
\dseq{\Gamma}{\Psi}{G}
}
\quad
\infer[\one_R]{
\dseq{\Upsilon}{\cdot}{\one}
}
{
}
\quad
\infer[\zero_L]{
\dseq{\Gamma}{\Psi,\zero}{G}
}{}
\qquad
\infer[\top_R]{ 
\dseq{\Gamma}{\Psi}{\top}}{}
\quad
 \infer[I]{
 \dseq{\Upsilon}{\varp}{\varp}
 }
 {}
\end{array}
$
}

\caption{Dyadic system for \sellU. $\Upsilon$
denotes a context with  formulas
of the form $\nbang{u}{F}$ where $u\in U$ (unbounded subexp.). $\Delta $ and $\Delta'$ contain only 
 formulas of the form 
$\nbang{a}{F}$ where $a\notin U$ (linear subexp.). $\Gamma$ denotes a context that may contain both, linear and unbounded subpexponentials.
$\Psi$ is a multiset of formulas. 
In $!_L$, $a$ is linear
and $u$ is unbounded in  $\textit{copy}$. 
In the rules for $\forallLoc$
and $\existsLoc$,   $\Typeloc{l}{b} \in \aprec$,
$b\preceq a$ 
 and $\alpha_e$ is fresh. Since the context $\mathcal{A}$ is only modified in rules $\forallLoc_R$ and $\existsLoc_L$, it is omitted in the other rules. 
}
\label{fig:sellU}
\end{figure}

\begin{theorem}[Cut-elimination \cite{DBLP:journals/tcs/NigamOP17}]\label{th:cut}
The  cut-rule below is admissible in \sellU for any (upwardly closed w.r.t. $U$) signature.
$$
\infer[Cut]{\dseqA{\Upsilon,\Delta,\Delta'}{\Psi,\Psi'}{G}}{
\dseqA{\Upsilon,\Delta}{\Psi}{F}
&
\dseqA{\Upsilon,\Delta'}{\Psi',F}{G}
}
$$
\end{theorem}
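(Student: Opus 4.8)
The plan is to follow the standard Gentzen-style cut-elimination argument, adapted to subexponentials as in \cite{DBLP:journals/tcs/NigamOP17}: every cut is either permuted towards the leaves or replaced by cuts on structurally smaller formulas, and a well-founded measure forces termination. As is customary for dyadic systems, I would first \emph{strengthen} the statement so that the induction covers, besides the linear cut of the theorem, an auxiliary rule (``$\bang$-cut'') whose cut formula $\nbang{a}F'$ is consumed from the \emph{subexponential} context of the right premise, the left premise being a promotion proof of $\dseqA{(\Upsilon,\Delta)^{\succeq a}}{\cdot}{F'}$; admissibility of $\mathit{Cut}$ and of $\bang$-cut is then proved by a single simultaneous induction. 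Concretely, the key lemma is: from cut-free derivations $\pi_1$ of $\dseqA{\Upsilon,\Delta}{\Psi}{F}$ and $\pi_2$ of $\dseqA{\Upsilon,\Delta'}{\Psi',F}{G}$ (resp. the analogous inputs for $\bang$-cut) one builds a cut-free derivation of $\dseqA{\Upsilon,\Delta,\Delta'}{\Psi,\Psi'}{G}$; it is established by induction on the lexicographic pair $\langle |F|,\ h(\pi_1)+h(\pi_2)\rangle$, where $|F|$ counts connectives, subexponential prefixes and subexponential quantifiers of the cut formula and $h(\cdot)$ is derivation height. Full cut-elimination then follows by an outer induction that eliminates a topmost cut at a time.

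The case analysis is routine except at the subexponentials. \textbf{Axiom cases}: if $\pi_1$ or $\pi_2$ is the initial rule $I$, the cut vanishes (after weakening the shared unbounded context $\Upsilon$). \textbf{Commutative cases}: if $F$ is not principal in the last rule $r$ of $\pi_1$ (resp. of $\pi_2$), push the cut above $r$; the combined height drops, so the induction hypothesis applies. Care is needed to route the split linear contexts $\Delta,\Delta'$ into the correct premise of a context-splitting rule ($\otimes$-$R$, $\limp$-$L$), to duplicate the cut when $r$ is $\textit{copy}$ on a formula of the shared $\Upsilon$, to recognise that a $\textit{store}$ acting on $F$ (when $F=\nbang{a}F'$) switches the reduction to the $\bang$-cut analysis, and to $\alpha$-rename the fresh subexponential eigenvariable introduced by $\forallLoc_R$/$\existsLoc_L$ so that freshness survives the permutation. \textbf{Principal propositional cases} ($\otimes$, $\oplus$, $\with$, $\limp$, and the trivial $\one$): the cut is replaced by one or two cuts on immediate subformulas, all strictly smaller, so the induction hypothesis applies; $\top$ and $\zero$ never occur as principal cut formulas.

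The content is concentrated in the two subexponential cases. \textbf{Principal subexponential case} ($F=\nbang{a}F'$): $\pi_1$ ends in $!_R$, so $\Psi=\cdot$ and it has an immediate subderivation $\pi_1'$ of $\dseqA{(\Upsilon,\Delta)^{\succeq a}}{\cdot}{F'}$, while in the $\bang$-cut we reach (via the $\textit{store}$ commutation) a $\pi_2$ whose last rule acts on the subexponential copy of $\nbang{a}F'$: if it is $!_L$ (dereliction), reduce to a $\mathit{Cut}$ on the smaller $F'$ using $\pi_1'$; if it is $\textit{copy}$ with $a\in U$, reduce to a $\bang$-cut against the strictly shorter premise of $\pi_2$ followed by a $\mathit{Cut}$ on $F'$; if $\nbang{a}F'$ is weakened/unused, discard $\pi_1$ and weaken in the unbounded part of $(\Upsilon,\Delta)^{\succeq a}$. \textbf{Principal subexponential-quantifier case} ($F=\forallLoc\Typeloc{\alpha}{a}.F'$, and dually $\existsLoc$): $\pi_1$ ends in $\forallLoc_R$ with a fresh $\Typeloc{\alpha_e}{a}$ and $\pi_2$ ends in $\forallLoc_L$ instantiating some $\Typeloc{l}{b}\in\aprec$ with $b\preceq a$; a substitution lemma --- derivations are stable under replacing a subexponential eigenvariable by a subexponential of a type below it --- transforms $\pi_1$ into a derivation of $\dseqA{\Gamma}{\Psi}{F'[l/\alpha]}$, and the original cut becomes a $\mathit{Cut}$ on $F'[l/\alpha]$, which is strictly smaller.

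The main obstacle sits precisely in these last two cases. For $F=\nbang{a}F'$ with $a$ unbounded, permuting past $\textit{copy}$ duplicates $\pi_1$, and one must verify that every $!_R$ still occurring in the assembled derivation sees a context of the prescribed shape $\Gamma^{\succeq c}$; this is exactly what upward-closure of $U$ buys (together with the very definition of $\Gamma^{\succeq a}$, which already discards unbounded formulas not above $a$), and keeping the context bookkeeping coherent across the interleaved $\mathit{Cut}$/$\bang$-cut reductions is the most error-prone step. For the quantifier case, the delicate point is the substitution lemma itself: a priori, replacing a subexponential variable inside a derivation could break a promotion side condition, and it is precisely the typing discipline $\Typeloc{\cdot}{\cdot}$, with $b\preceq a$, that guarantees $\alpha_e$ is used only where any $s$ with $\Typeloc{s}{b}$ is equally admissible, so that all side conditions are preserved. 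Once these two points are settled, each reduction strictly decreases the lexicographic measure, the key lemma follows, and with it the theorem.
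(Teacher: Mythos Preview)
The paper does not supply its own proof of this theorem: it is stated in the preliminaries with an explicit citation to \cite{DBLP:journals/tcs/NigamOP17} and no proof environment follows. There is therefore nothing in the paper to compare your argument against; the result is imported as background.

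That said, your sketch is the standard Gentzen-style argument for dyadic subexponential systems and matches the methodology of the cited source: the simultaneous induction on a linear cut and an auxiliary $\bang$-cut, the lexicographic measure on cut-formula size and combined height, the commutative permutations, and the principal reductions for the propositional connectives are all as expected. You correctly isolate the two places where the subexponential structure actually matters---the principal $\nbang{a}$ case (where upward-closure of $U$ is needed so that the promotion side condition survives the context manipulations after duplicating the left premise through $\textit{copy}$) and the principal $\forallLoc/\existsLoc$ case (where a substitution lemma for subexponential eigenvariables is required, and the typing discipline $\Typeloc{l}{b}$ with $b\preceq a$ is precisely what makes the substituted derivation still satisfy every $!_R$ side condition). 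Your identification of these as the delicate steps, and your explanation of why they go through, is accurate. In short: the paper offers no proof to compare with, and your outline is the expected one.
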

\section{The Language of Processes}\label{sec:distri-py}

\newcommand{\piopenout}[4]{{#1}_{#2}\langle{#3}\rangle.{#4}}
\newcommand{\piopenmove}[4]{{#1}_{#2}.{\tt move}\langle{#3}\rangle.{#4}}
\newcommand{\redrule}[1]{[{\tt #1}]}

We introduce the syntax and operational semantics of the synchronous
$\pi$-calculus extended with binary choice \cite{DBLP:journals/mscs/CairesPT16} and explicit domain location,  migration and communication, similarly to  \cite{DBLP:conf/concur/CairesPPT19}. Concrete domains/locations range over $a,b,c$,  domain/location variables over $\alpha,\alpha',\beta$ and channel names  over $x,y,z$.   Domain identifiers are attached to processes $(P,Q,\ldots)$, to explicitly identify their current location.
Processes are defined in terms of pre-processes as below.

\begin{definition}[Processes]\label{def:processes} Given countably infinite
 (disjoint) sets, $\mN$  of names $(x,y,z,u,v)$,  $I$  of concrete
 domains ($a,b,c,d$)  and  $\mV$ of domain variables
 ($\alpha,\alpha',\beta$),
{\em processes} are built from: \\

$
\begin{array}{lllll}
\mbox{Pre-process} & \quad & p,q & ::= &  \skipp \mid (p \parallel q)  \mid  \pioutput{x}{a}{y}{p} \mid \piinput{x}{a}{y}{p} \mid !\piinput{x}{a}{y}{p} \mid \pinew{x}{p}\\
&&&& \picase{x}{a}{p}{q} \mid \piinl{x}{a}{p}{} \mid \piinr{x}{a}{p}{} \\
&&&& \pimoveout{x}{a}{y}{b}{p} \mid \pimovein{x}{a}{y}{b}{p} \\&&&& 
\piwout{x}{a}{b}{p} \mid \piwin{x}{a}{\alpha}{p}\mid \pimed{x}{a}{b}{y}
\\
\mbox{Processes} & \quad & P,Q & ::= &  \boxp{a}{p} \mid \zero \mid ( P \parallel Q) \mid \pinew{x}{P}
\end{array}
$
\end{definition}

Process $P=\boxp{a}{p}$
 denotes  a  pre-process $p$  that  is currently in location $a$. Moreover, 

\begin{itemize}
\item   Process $\boxp{b}{\pioutput{x}{a}{y}{p}}$ denotes a process in location $b$ that outputs  $y$ 
 through channel $x$ and continues as $p$. The dual process $\boxp{a}{\piinput{x}{b}{y}{p}}$ denotes a process in location $a$ that expects to receive $y$ 
 via $x$ and proceeds as $p$. 
\item Process $\boxp{b}{ \picase{x}{a}{p}{q}}$ in location $b$ offers the choice of continuing as  $p$ or $q$ 
to another process
via session $x$. The dual process $\boxp{a}{\piinl{x}{b}{p}{}}$ (resp. $\boxp{a}{\piinr{x}{b}{p}{}}$) in location $a$ chooses the process on the left (resp. right)  offered 
on session $x$
 and continues as $p$.
\item Process $\boxp{b}{\pimoveout{x}{a}{y}{c}{p} }$ in location $b$ denotes the pre-process that signals to another process
 the location  $c$ to which the pre-process $p$ should be migrated to. After the migration, the communication will continue on the fresh  session  channel $y$. The dual process $\boxp{a}{ \pimovein{x}{c}{y}{c}{p}}$ in location $a$ is prepared to migrate the communication actions in $p$  to session $y$ on a location $c$.

\item Processes $\boxp{b}{\piwout{x}{a}{b}{p}}$ and $\boxp{a}{\piwin{x}{b}{\alpha'}{p}}$  are dual and denote output and input, respectively, of  locations via  session $x$. 
\item Pre-process  $\pimed{x}{a}{b}{y}$ is a copycat
or mediator (see \S \ref{th:id-exp} below) that 
identifies/connects the channel $x$ on location $b$ with the channel $y$ on location $a$. 

\item Pre-process $!\piinput{x}{a}{y}{p}$ denotes
a replicated/persistent  input, useful for defining services that can be used as many times as needed. 
\item  As usual, $\skipp$,  $p\parallel q$   
and $\pinew{x}{P}$ 
 denote, resp.,  inaction, parallel composition and  name restriction.  
\end{itemize}
As in~\cite{DBLP:journals/tcs/Sangiorgi96a, DBLP:journals/tcs/Boreale98}, 
the objects in outputs are fresh names. Hence, in
$\pioutput{x}{a}{y}{p}$  and  $\pimoveout{x}{a}{y}{b}{p}$, the output
$y$ is a fresh name. Moreover, input names are bound in the
  body of a pre-process, i.e. $y$ in $p$ is bound in all $\piinput{x}{a}{y}{p}$,
  $!\piinput{x}{a}{y}{p}$, $\pimovein{x}{a}{y}{b}{p}$ (and $\alpha$ in
  $p$ in $\piwin{x}{a}{\alpha}{p}$).  Finally, name $x$ in process $P$
is bound in $\pinew{x}{P}$.

\begin{example}[Locations]\label{ex:st.values} The following process illustrates  the parallel composition of processes in different locations, one in $sv$ (server location) and the other in $c$ (client location) that aim to communicate with each other via  a restricted channel $x$:
$P =  \pinew{x}{(
\boxp{sv}{\piinput{x}{c}{y}{\pioutput{x}{c}{z}{\piinput{y}{c}{k}{\zero}}}}
\parallel
\boxp{c}{\pioutput{x}{sv}{y}{( \pioutput{y}{}{k}{\zero} \parallel  \piinput{x}{sv}{z}{\zero})}}
)}$
Notice that the process on the left aims to use  the name $y$ received first in a later communication, after communicating $z$ on channel $x$.
\end{example}

\begin{figure}
\hrule 
\smallskip
Structural Congruence for Pre-Processes
\smallskip
\hrule 
\smallskip
\noindent\resizebox{.95\textwidth}{!}{
$
\begin{array}{c@{\hspace{.5cm}}c@{\hspace{.5cm}}c}
p\parallel \skipp  \equiv {p} & {p\parallel q}\equiv {q\parallel p} & {p\parallel (q \parallel r)}\equiv {(p\parallel q)\parallel r)}\\
{\pinew{x} \zero}\equiv
 {\zero}&{\pinew{x}\pinew{y}p}\equiv{
                   \pinew{y}\pinew{x} p} &   p\equiv_\alpha q \Rightarrow {p}\equiv {q}\\
{p\parallel \pinew{y}{q}}\equiv {\pinew{y}{(p\parallel q)}}, \text{ if } y\notin fn(p)&\\
\end{array}
$
}
\smallskip
\hrule 
\smallskip
Structural Congruence for  Processes
\hrule 
\smallskip
\noindent\resizebox{1\textwidth}{!}{
$
\begin{array}{c@{\hspace{.8cm}}c@{\hspace{.8cm}}c}
\boxp{a}{\skipp } \equiv \zero & \boxp{a}{p\parallel q}\equiv \boxp{a}{p} \parallel \boxp{a}{q}& \boxp{a}{\pinew{y}p}\equiv \pinew{y}\boxp{a}{p}\\
 P\parallel \zero\equiv P  &\pinew{x} \zero\equiv \zero&   \boxp{a}{p}\parallel \boxp{a}{\pinew{y}q}\equiv \pinew{y}\boxp{a}{p\parallel q}, \text{ if } x\notin fn(p) \\
 P\parallel Q\equiv Q\parallel P    & \pinew{x}\pinew{y}P\equiv \pinew{y}\pinew{x} P &  P\equiv_\alpha Q \Rightarrow P\equiv Q  \\ 
P\parallel (Q\parallel R)\equiv (P\parallel Q) \parallel R &  P \parallel \pinew{x}Q \equiv \pinew{x}(P\parallel Q), \text{ if } x\notin fn(P) &\boxp{a}{\pimed{x}{a}{b}{y}}\equiv \boxp{b}{\pimed{y}{b}{a}{x}}
\end{array}
$
}

\hrule 
\smallskip
\resizebox{0.95\textwidth}{!}{
$
\begin{array}{l@{\hspace{2cm}}rcl}
\redrule{Comm}&\boxp{a}{\pioutput{x}{b}{y}{p}}\parallel \boxp{b}{\piinput{x}{a}{z}{q}}&\rightarrow&\pinew{y}{( \boxp{a}{p}\parallel \boxp{b}{q\{y/z\}})}\\
\redrule{!Comm}&\boxp{a}{\pioutput{x}{b}{y}{p}}\parallel \boxp{b}{!\piinput{x}{a}{z}{q}}&\rightarrow&\pinew{y}{( \boxp{a}{p}\parallel \boxp{b}{q\{y/z\}}) \parallel \boxp{b}{!\piinput{x}{a}{z}{q}}}\\
\redrule{LocOut}&\boxp{a}{\piwout{x}{b}{c}{p}}\parallel \boxp{b}{\piwin{x}{a}{\alpha}{q}}&\rightarrow&\boxp{a}{p}\parallel\boxp{b}{ q\{c/\alpha\}}\\
\redrule{Move}&\boxp{a}{\pimoveout{x}{b}{y}{c}{p}}\parallel \boxp{b}{\pimovein{x}{a}{y'}{c}{q}}&\rightarrow&\pinew{y}{\boxp{c}{p\parallel q\{y/y'\}}}\\
\redrule{LCase}&\boxp{a}{\piinl{x}{b}{p}{}}\parallel \boxp{b}{\picase{x}{a}{q_1}{q_2}} &\rightarrow&\boxp{a}{p}\parallel \boxp{b}{q_1}\\
\redrule{RCase}&\boxp{a}{\piinr{x}{b}{p}{}}\parallel \boxp{b}{\picase{x}{a}{q_1}{q_2}} &\rightarrow&\boxp{a}{p}\parallel \boxp{b}{q_2}\\
\redrule{Id } & \pinew{x}{ (\boxp{a}{\pimed{x}{a}{b}{y}}\parallel P)} &\rightarrow & P\{y/x\}\\
\redrule{Res}&\mbox{ if } P\rightarrow  Q \mbox{ then\quad } (\nu y)P &\rightarrow&  (\nu y)Q\\
\redrule{Par}&\mbox{ if } Q\rightarrow Q'  \mbox{ then\quad} P\parallel Q &\rightarrow& P \parallel Q'
\end{array}
$
}
\hrule
\caption{Structural congruence and operational semantics\label{fig:sos}}
\end{figure}

Communication will rely on a  location sensitive  extension of the structural congruence and the operational semantics.
The structural congruence identities are defined at the level of processes and pre-processes and 
the operational semantics is defined by the reduction $\rightarrow$
(closed under $\equiv$) in  Fig. \ref{fig:sos}, where
  $q\{y/x\}$ denotes the implicit substitution of $y$ for $x$ in $q$
  (also for location substitution in rule $\redrule{LocOut}$).

Our semantics is similar  to the standard  operational semantics of
the $\pi$-calculus. The difference here is that the location of each
process is explicit and one process in one location can communicate
with a process in another location. For instance, rule
$\redrule{Comm}$ describes the communication of a process in location
$a$ that sends a fresh name $y$ to a process in location $b$ through
channel $x$. After the communication, the two processes share the
restricted  channel $y$. Note that  both
processes continue  in their original location after the interaction. The rule $\redrule{Move}$ allows processes 
to migrate to a \emph{known a-priori} location $c$
and the communication continues through the fresh channel $y$. Using the rule  $\redrule{LocOut}$,
it is possible to communicate a location before migrating to it. 
The other rules are standard.

\begin{example}\label{ex:st.values-red}
Consider the process $P$ in 
Example~\ref{ex:st.values}. Note that 
$P \rightarrow^*  \pinew{x}\zero \equiv \zero$, since: 

$
\small
\begin{array}{rl}
\boxp{sv}{\piinput{x}{c}{y}{\pioutput{x}{c}{z}{\piinput{y}{c}{k}{\zero}}}}
\parallel
\boxp{c}{\pioutput{x}{sv}{y}{(\pioutput{y}{sv}{k}{\zero} \parallel \piinput{x}{sv}{z}{\zero})}}
& \rightarrow \pinew{y}{(
\boxp{sv}{\pioutput{x}{c}{z}{\piinput{y}{c}{k}{\zero}}}
\parallel
\boxp{c}{(\pioutput{y}{sv}{k}{\zero} \parallel \piinput{x}{sv}{z}{\zero})}
)}
\\
& \rightarrow \pinew{y,z}{(
\boxp{sv}{\piinput{y}{c}{k}{\zero}}
\parallel
\boxp{c}{\pioutput{y}{sv}{k}{\zero}}
)}
  \rightarrow \pinew{y,z,k}{(
\boxp{sv}{\zero}
\parallel
\boxp{c}{\zero}
)}\equiv \zero
\end{array}
$
{The intended communications are performed: first the pre-process on the left receives name $y$ through $x$, then it communicates $z$ through the same channel, and finally it uses the name $y$ received initially. Observe that the left pre-process has ``stored'' the name $y$ for future use.}
\end{example}

\section{Subexponentials as Domains: Curry-Howard Interpretation}\label{sec:curry-howard}
In this section we introduce a type system for 
the domain-aware process calculus presented in the previous section. We show that operational reductions correspond to proof reductions in the process of cut-elimination. Also, the proof of the identity expansion theorem (\S \ref{th:id-exp}) reveals the existence of process mediators among the different locations of the system. 
We start with the definition of types.
\begin{definition}[Types]\label{def:processes-types}
Types are  built from formulas in \sellU as follows: 
\[
\begin{array}{lllll}
\mbox{Pre-types} & \qquad &  f,g& ::= & \one  \mid \varp \mid f\otimes g\mid f \lolli g
 \mid f \oplus g \mid f \with g \mid 
  \nbang{s} f \mid  \forallLoc \Typeloc{l_x}{a}. f \mid \existsLoc \Typeloc{l_x}{a}. f\\
\mbox{Location-sequences} & &  w,w' & ::= &  s \mid (w,w') \mid w . w'  \mid \lambda x. w
\\
\mbox{Types} & &  A,B & ::= & \BOXF{w}{f}
\end{array}
\]
where $s$ is a subexponential (constant or variable) and $\BOXF{w}{f}$ is defined as: \\

{
\noindent\resizebox{\textwidth}{!}{
$
\begin{array}{lllllll}
\BOXF{s}{\one} &=& \nbang{s}\one 
&\quad & 
\BOXF{s}{\varp} &=& \nbang{s}\varp\\
\BOXF{a.(w,w')}{f {\star} g}&=& \nbang{a}(\BOXF{w}{f} \star \BOXF{w'}{g}) 
&\quad & \star \in \{\otimes, \limp, \oplus, \with\}
\\
\BOXF{a.w}{\nbang{c}f} & = & \nbang{a}\BOXF{c.w}{f}
\\
\BOXF{a.(\lambda y.  w)}{\forallLoc{\Typeloc{l_x}{b}}. f} &=&
\nbang{a}\forallLoc{\Typeloc{l_x}{b}}.\BOXF{w \{ l_x/y\}}{f}
&\quad & 
\BOXF{a.(\lambda y. w)}{\existsLoc {\Typeloc{l_x}{b}}.f} & = &
\nbang{a}{\existsLoc{\Typeloc{l_x}{b}}.\BOXF{w\{ l_x/y\}}f}
\end{array}
$}
}

\end{definition}

Note that in valid types,   all the sub-formulas (including the
formula itself)  are   preceded with  a bang. For instance,  the
(valid) types 
$
A = \nbang{a}{}(\nbang{b}{\one} \otimes \nbang{c}{\one})$ and $B = 
\nbang{a}{}(\nbang{b}{( \nbang{c}{\one}\otimes \nbang{d}{\one}) }~\limp~\nbang{e}{\one})
$ can be written, respectively, as  
$
A = \BOXF{a.(b,c)}{\one\otimes \one}
$ and  $B = \BOXF{a.(b.(c,d),e)}{ (\one\otimes \one) \limp \one}$. Such a definition keeps track of
all the spaces (subexponentials) needed to adequately locate where a
service is provided/offered and when processes need to migrate to
continue the communication.

The case for quantifiers requires to bind variables occurring in $w$ with the binding variable in  $\existsLoc$ or $\forallLoc$. We write $\lambda y. w$ to bind occurrences of a variable $y$ in $w$ and $w\{ l_x/y\}$  is the result of the application of $\lambda y. w$ to $l_x$.  For instance,  $A=\nbang{a}(\forallLoc{l_x:b}. \nbang{l_x}(\nbang{s}(1 ) \tensor \nbang{s}(1)) $ is written as $A=\BOXF{a.\lambda y. (y.(s, s)))}{\forallLoc{l_x:b}(1\tensor 1)$}.

We shall write 
 $\boxf{a}{f}$ to represent the type 
 $\BOXF{a.w}{f}$ where $w$ can be inferred from the context (or from a typing derivation). 
 As we already know,  all types are marked with a bang and hence,
 with  $\boxf{a}{f}$, we signal the outermost bang and leave implicit
 the inner bangs, simplifying the notation. 
 
Following  \cite{DBLP:journals/mscs/CairesPT16},   the multiplicative connectives    $\tensor,\limp$  correspond to input/output and the additive ones   $\with,\oplus$ correspond to selection/choice. 
Here, from  \sellU, 
 $\boxf{a}{}$ (a formula preceded with $\nbang{a}$) represents a service 
offered from a site/location $a$. 
Moreover, bangs will allow us to migrate services  to other locations. 
 The quantifiers $\forallLoc,\existsLoc$ 
 will be used for the communication of  locations. 

Since all  types (and subformulas)  are marked with a subexponential, the linear context will be always empty. Hence, 
different from the system in Fig. \ref{fig:sellU}, 
 type judgments will take the form  
$
\deduce{\seqtypeA{\mathcal{A}}{\Delta}{P}{T}}{}
$  where: $\Delta$ is a set of \emph{typed channels} of the form $\TYPELOC{x}{b}{a}{A}$ meaning ``the service $A$ is available 
from location $a$ on channel $x$
and it can be used/consumed from location $b$
''; $P$ is a process;  and $T$ is a typed channel. We read $P::\TYPELOC{x}{b}{a}{A}$  as
``$P$ implements the service $A$ on channel $x$ at location $a$ to be
used  by a processes located at $b$''. Processes are
  considered modulo structural congruence hence, as in
  \cite{DBLP:conf/concur/CairesPPT19}, typability is closed
  under $\equiv$ by definition.
Channel names appear at most once in the sequent (either on the right or on the left). 
As we shall see, a process  in a location $a$ can
only \emph{implement} services on that location but it can consume/use
services from any other location
accessible from $a$.  
Similarly as we did in Fig. \ref{fig:sellU}, we shall omit $\mathcal{A}$ when it is unimportant or it  is clear from the context. 

\noindent\textbf{Subexponential signature.}
We shall gradually introduce the system rules, with
  its intended interpretation.
  The complete set of rules is in
 Fig. \ref{fig-type-sys}, App. \ref{app:typesystem}. 
    We assume 
a set of linear subexponentials $I_l$
 with a preorder $\preceq_l$. 
Those indices represent the different locations of the system.
We build the subexponential signature 
$\Sigma = (I,\preceq,U)$ where: 
$
\begin{array}{llllllll}
I &=& I_l \cup  \{ \unb{a} \mid a \in I_l\} \cup \{a \sqcap b \mid a,b \in I_l\}
&\mbox{and}&
 U &=& \{ \unb{a} \mid a \in I_l\}
\end{array}
$ 

The preorder $\preceq$ satisfies:
for linear subexponentials, 
$a \preceq b$ iff $a \preceq_l b$. Moreover,   $ a \preceq \unb{a}$;
$\unb{a} \preceq \unb{b}$ iff $a \preceq b$;
$a\sqcap b = b \sqcap a$;  $a\sqcap b  \preceq a$;  and  $a = a \sqcap a$.

Intuitively, we start with a set of (linear) locations. Given one of such location $a$, the unbounded version of it (where replicated services 
in that location reside)
 is denoted as $\unb{a}$ and $a\preceq \unb{a}$.  For each two linear locations $a,b$, the linear subexponential $a\sqcap b$ denotes a location that is below 
  $a$ and $b$. 

In the following, we shall use  $\Upsilon$
to denote a context 
containing only channel types with unbounded subexponentials, i.e., 
$\TYPELOC{x}{b}{\unb{a}}{A}$
for some (linear) $a$. We shall use $\Delta$, $\Delta'$ to denote a context 
with only linear subexponentials and 
$\Gamma$ to denote a context with linear and unbounded subexponentials. 

\noindent\textbf{Unit.}
  The rules for $\one$ are the following: 

$$
\small
\qquad
\infer[\one_L]{\seqtype{\Gamma, \TYPELOC{x}{b}{a}{\one}}{P}{T}}{\seqtype{\Gamma}{P}{T}}
\quad
\infer[\one_R]{\seqtype{\Upsilon}{\boxp{b}{\zero}}{\TYPELOC{x}{a}{b}{\one}}}{}
$$

Note that in rule $\one_R$, all the formulas in $\Upsilon$  can be weakened. 
In both cases, $\boxf{a}{\one}$
is intended to be $\BOXF{a}{\one} = \nbang{a}\one$ (since the bang is already marking an atomic formula). Intuitively, the formula  $\nbang{b}{\one}$ is the type of an ending process ($\zero$) located at $b$. \\

\noindent\textbf{Multiplicative Rules.}
The rules for $\otimes$ are: 
\[
\small
\begin{array}{c}
\infer[\tensor_L]{\seqtype{\Gamma, \TYPELOC{x}{b}{a}{A \otimes B}}{\boxp{b}{\piinput{x}{a}{y}{p}}}{T}}{
\seqtype{\Gamma, {\TYPELOC{x}{b}{a}{B}}, {\TYPELOC{y}{b}{a}{A}} }{\boxp{b}{p}}{T}
}
\quad
\infer[\tensor_R]{\seqtype{\Upsilon, \Delta,\Delta'}{\boxp{b}{\pioutput{x}{a}{y}{(p \parallel q)}}}{\TYPELOC{x}{a}{b}{A\otimes B}}}
{
 \deduce{\seqtype{\Upsilon^{\succeq b},\Delta^{\succeq b}}{\boxp{b}{p}}{{\TYPELOC{y}{a}{b}{A}}}}{}
 &
  \deduce{\seqtype{\Upsilon^{\succeq b}, \Delta'^{\succeq b}}{\boxp{b}{q}}{{\TYPELOC{x}{a}{b}{B}}}}{}
}
\end{array}
\]

In the rule $\otimes_L$, the process 
$\piinput{x}{a}{y}{p}$ is waiting for an input of type $A$   from the location $a$. 
Hence, assuming that such a service $A$ is available at the fresh channel $y$, the remaining process $p$ should 
be able to continue by consuming the 
service $B$ at $x$ (and $A$ at $y$). 
In the rule $\otimes_R$, 
the output process  located at $b$ implements the  service $A\otimes B$ if  $p$ is able to type the output $y$ with the appropriate type $A$ and the continuation $q$ is able to implement the service $B$ {at} $x$. Note that no service located in $c$ s.t. $b\not\preceq c$
can be stored in the context (otherwise, 
$\Delta^{\succeq b}$ is not defined\footnote{In fact, if $\Delta$ contains only linear subexponentials and  
$\Delta^{\succeq b}$  is defined, then 
$\Delta=\Delta^{\succeq b}$ (this fact is important for the proof of Th. \ref{th:cut-elim}). We have decided to use $\Delta^{\succeq b}$ instead of $\Delta$ in order to make explicit the restriction imposed by the promotion rule.}). 
This means that a process cannot offer a service if it has not still processed all   services requested from other (non-related) locations
 (see Ex. \ref{ex:storing} below). This is an invariant and
 a design principle in our system:  a pre-process $p$ located at $b$ can only type channels on the same location (right rules). For that, the context cannot have any unrelated resource (due to the promotion rule). On the other hand, 
 left rules do not have any restriction
on the context (similar to dereliction in linear logic). However, note that 
a process located at $b$ is only able to consume a service of the form $\TYPELOC{x}{b}{a}{A}$ (offered in $a$).

We can see the rules of the type system as \emph{macro rules} built on top of rules in Fig. \ref{fig:sellU} (when  the term $P$ and the labels for channels are erased): 

\begin{center}
\hspace{2.5cm}\resizebox{.8\textwidth}{!}{$
\infer[!_L]{\dseq{\Gamma, \BOXF{a.(a. w, a. w')}{A\otimes B}}{\cdot}{G}}{
\infer[\otimes_L]{\dseq{\Gamma}{
\BOXF{a. w}{A}\otimes \BOXF{a. w'}{B}}{G}}{
 \infer=[store]{\dseq{\Gamma}{\boxf{a}{A}, \boxf{a}{B}}{G}}{
 \dseq{\Gamma,\boxf{a}{A}, \boxf{a}{B}}{\cdot}{G}
 }
}
}
\qquad
\infer[!_R]{\dseq{\Upsilon,\Delta,\Delta'}{\cdot}{\BOXF{b. (b. w,b. w')}{A\otimes B}}}{
 \infer[\otimes_R]{\dseq{\Upsilon^{\succeq b},\Delta^{\succeq b},\Delta'^{\succeq b}}{\cdot}{{\BOXF{b. w}{A}\otimes \BOXF{b. w'}{B}}}}{
 \deduce{\dseq{\Upsilon^{\succeq b},\Delta^{\succeq b}}{\cdot}{\boxf{b}{A}}}{}
 &
  \deduce{\dseq{\Upsilon^{\succeq b},\Delta'^{\succeq b}}{\cdot}{\boxf{b}{B}}}{}
 }
}
$
}
\end{center}

The rules for linear implication are  dual to those for  multiplicative conjunction. This means that an output process consumes types of the form $A\limp B$
while input processes offer services of the same shape: 
{\small
\[
\begin{array}{c}
\infer[\limp_L]{\seqtype{\Upsilon,\Delta,\Delta',\TYPELOC{x}{b}{a}{A\limp B}}{\boxp{b}{\pioutput{x}{a}{y}{(p \parallel q)}}}{T}}{
 \deduce{\seqtype{\Upsilon,\Delta}{\boxp{b}{p}}{{\TYPELOC{y}{a}{b}{A}}}}{}
 &
  \deduce{\seqtype{\Upsilon,\Delta'{,\TYPELOC{x}{b}{a}{B}}}{\boxp{b}{q}}{{T}}}{}
}
 \qquad
  \infer[\limp_R]{\seqtype{\Gamma}{\boxp{b}{\piinput{x}{a}{y}{p}}}{\TYPELOC{x}{a}{b}{A\limp B}}}{
  \seqtype{\Gamma^{\succeq b},{\TYPELOC{y}{b}{a}{A}}}{\boxp{b}{p}}{{\TYPELOC{x}{a}{b}{B}}}
  }
\end{array}
\]}

From the rule   $\limp_L$, it is clear that $\boxf{a}{(A \limp B)}$
is a formula of the shape
$\BOXF{a.(b. w, a. w')}{A \limp 
B}$. In  the rule  $\limp_R$, 
$\boxf{b}{(A \limp B)}$
is intended to mean a formula of the shape
$\BOXF{b.(a. w, b. w')}{A \limp 
B}$.
The correspondence with the  rules in Fig. \ref{fig:sellU} can be obtained similarly as we did for the rules of $\tensor$. 
It is worth noticing the annotations on channels. For instance,
in $\limp_R$, 
 the input located at $b$  receives on 
$y$ some information coming from  $a$. 
That justifies the annotation $y^b_a$. 
Dually, in $\limp_L$, the  output $y$ must be typed by the process $p$ at $b$, to be used by some process at location $a$ (annotation $y^a_b$).

\paragraph{Additive rules.}
Similar to the first systems for 
session types based on linear logic
\cite{DBLP:journals/mscs/CairesPT16}, the additive rules give meaning
to processes that offer and select different execution alternatives: 
\[
\begin{array}{c}
\infer[\oplus_L]{\seqtype{\Gamma, \TYPELOC{x}{b}{a}{A \oplus B}}{\boxp{b}{\picase{x}{a}{p}{q}}}{T}}{
\seqtype{\Gamma, {\TYPELOC{x}{b}{a}{A}} }{\boxp{b}{p}}{T}
&
\seqtype{\Gamma, {\TYPELOC{x}{b}{a}{B}} }{\boxp{b}{q}}{T}
}
\\\\
\infer[\oplus_{R1}]{\seqtype{\Gamma}{\boxp{b}{\piinl{x}{a}{p}}}{\TYPELOC{x}{a}{b}{A\oplus B}}}
{
 \deduce{\seqtype{\Gamma^{\succeq b}}{\boxp{b}{p}}{{\TYPELOC{x}{a}{b}{A}}}}{}
}
\qquad
\infer[\oplus_{R2}]{\seqtype{\Gamma}{\boxp{b}{\piinr{x}{a}{p}}}{\TYPELOC{x}{a}{b}{A\oplus B}}}
{
 \deduce{\seqtype{\Gamma^{\succeq b}}{\boxp{b}{p}}{{\TYPELOC{x}{a}{b}{B}}}}{}
}
\\\\
\infer[\with_{L1}]{\seqtype{\Gamma,\TYPELOC{x}{b}{a}{A\with B}}{\boxp{b}{\piinl{x}{a}{p}}}{T}}{
\seqtype{\Gamma,\TYPELOC{x}{b}{a}{A}}{\boxp{b}{p}}{T}
}
\qquad
\infer[\with_{L2}]{\seqtype{\Gamma,\TYPELOC{x}{b}{a}{A\with B}}{\boxp{b}{\piinr{x}{a}{p}}}{T}}{
\seqtype{\Gamma,\TYPELOC{x}{b}{a}{B}}{\boxp{b}{p}}{T}
}
\\\\
\infer[\with_R]{\seqtype{\Gamma}{\boxp{b}{\picase{x}{a}{p}{q}}}{\TYPELOC{x}{a}{b}{A \with B}}}{
\seqtype{\Gamma^{\succeq b}}{\boxp{b}{p}}{\TYPELOC{x}{a}{b}{A}}
&
\seqtype{\Gamma^{\succeq b}}{\boxp{b}{q}}{\TYPELOC{x}{a}{b}{B}}
}
\end{array}
\]
The process $\boxp{b}{\picase{x}{a}{p}{q}}$
can make use of a behavior  $A\oplus B$ 
offered at $x$ 
on location $a$, if $p$ is able to consume $A$ 
and $q$ is able to consume $B$. On the other side, 
the process $\boxp{b}{\piinl{x}{a}{p}}$
implements the behavior $A \oplus B$
by selecting $A$ (which is implemented by $p$ at $b$). 
The reasoning behind $\with$ is dual. 

\paragraph{The subexponential rules.}\label{sec:exp-rules}
The following rules govern the migration of processes to other locations.

\[
\begin{array}{c}
\infer[\bang_L]{\seqtype{\Gamma,\TYPELOC{x}{b}{a}{\nbang{c}{A}}}{\boxp{b}{\pimovein{x}{a}{y}{c}{p}}}{T}}{
\seqtype{\Gamma,\TYPELOC{y}{c}{c}{A}}{\boxp{c}{p}}{T}
}
\qquad
\infer[\bang_R]{\seqtype{\Gamma}{\boxp{b}{\pimoveout{x}{a}{y}{c}{p}}}{\TYPELOC{x}{a}{b}{\nbang{c}A}}}{
\seqtype{\Gamma^{\succeq b}}{\boxp{c}{p}}{\TYPELOC{y}{c}{c}{A}}
}
\end{array}
\]

In both rules, the processes migrate to the location $c$ and the communication continues on the fresh channel $y$. 
Note that $y$ is ``local'' to  location $c$: it is implemented and consumed inside the same location (annotation $y^c_c$). 
The corresponding derivations with the system in Fig. \ref{fig:sellU} are:

\hspace{4cm}
$
\infer[\it{def}]{\dseq{\Gamma, \BOXF{a. w}{\nbang{c}{A}}}{\cdot}{G}}{
\infer[!_L]{\dseq{\Gamma, \nbang{a}\BOXF{c. w}{A}}{\cdot}{G}}{
\infer[store]{
\dseq{\Gamma }{\BOXF{c. w}{{A}}}{G}}{
\dseq{\Gamma,\boxf{c}{A}}{\cdot}{G}
}
}
}
\qquad
\infer[\it{def}]{\dseq{\Gamma}{\cdot}{\BOXF{b. w}{\nbang{c}{A}}}}{
\infer[!_R]{\dseq{\Gamma}{\cdot}{\nbang{b}\BOXF{c\cdot w}{{A}}}}{
\dseq{\Gamma^{\succeq b}}{\cdot}{\boxf{c}{A}}
}
}
$
\\ 
\noindent\textbf{The subexponential quantifiers.}
The quantifiers will be used to   communicate locations/domains. 
{
\[
\begin{array}{c}
\infer[\existsLoc_L]{\seqtypeA{\mathcal{A}}{\Gamma,\TYPELOC{x}{b}{a}{\existsLoc_{\Typeloc{\alpha}{a\sqcap b}}.A}}{\boxp{b}{\piwin{x}{a}{\alpha}{p}}}{T}}{
\seqtypeA{\mathcal{A},\alpha\preceq a\sqcap b}{\Gamma,{\TYPELOC{x}{b}{a}{A}}}{\boxp{b}{p}}{T}
}
\qquad
\infer[\existsLoc_R]{\seqtype{\Gamma}{\boxp{b}{\piwout{x}{a}{{l}}{p}}}{\TYPELOC{x}{a}{b}{\existsLoc _{\Typeloc{\alpha}{a\sqcap b}}.A}}}{
\seqtype{\Gamma^{\succeq b}}{\boxp{b}{p}}{{\TYPELOC{x}{a}{b}{A\{l/\alpha\}}}}
}
\end{array}
\]}

 In $\existsLoc_L$, $\alpha$  is a fresh subexponential of type $a\sqcap b$. Note that this rule extends the signature with the typing information  $\Typeloc{\alpha}{a\sqcap b}$. 
In $\existsLoc_R$, $l$ must be of type $a\sqcap b$ (i.e., $l \preceq a\sqcap b$). Hence, the 
process $\boxp{b}{\piwout{x}{a}{{l}}{p}}$
implements the behavior of sending a specific location $l$, which is  reachable from both $a$ and $b$.

The rules for the universal quantifier are dual and omitted (see
Fig. \ref{fig-type-sys}, App. \ref{app:typesystem}).

\noindent\textbf{The unbounded subexponential rules.} 
In  \cite{DBLP:journals/mscs/CairesPT16}, the LL exponential $!$ is used to specify replicated servers. In our setting, 
the type for a replicated service at location $a$ will be marked with the subexponential   $\nbang{\unb{a}}$
(remember that $\unb{a}\in U$ is unbounded). A replicated service allows for using as many copies of the service as needed. This is reflected by the rules below: 
\[
\infer[copy]{\seqtype{\Gamma,\TYPELOC{x}{b}{\unb{a}}{\nbang{a}A}}{\boxp{b}{\pioutput{x}{a}{y}{p}}}{T}}{
\seqtype{\Gamma,\TYPELOC{x}{b}{\unb{a}}{\nbang{a}A}, \TYPELOC{y}{b}{a}{A}}{\boxp{b}{{p}}}{T}
}
\qquad
\infer[!^+_R]{\seqtype{\Upsilon}{\boxp{b}{!\piinput{x}{a}{y}{p}}}{\TYPELOC{x}{a}{\unb{b}}{\nbang{b}A}}}{
\seqtype{\Upsilon^{\succeq \unb{b}}}{\boxp{b}{{p}}}{\TYPELOC{y}{a}{b}{A}}
}
\]

In \emph{copy}, an instance  of $A$
is replicated and offered on the fresh channel $y$ (located at $a$). In the right rule, only unbounded resources ($\Upsilon^{\succeq \unb{b}}$) can be used to show that $p$ implements the behavior $A$ at $b$. 

As shown for $\otimes$ and $!$, the rules of the type system can be seen as macro rules of the system in Fig.~\ref{fig:sellU}: 

\begin{proposition}\label{prop:admissible}
The typing rules above are admissible in \sellU. \end{proposition}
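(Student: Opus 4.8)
The plan is to proceed by a case analysis on the last typing rule applied, showing in each case that the rule can be simulated by a (bounded-height) derivation in the dyadic system for \sellU\ of Figure \ref{fig:sellU}, once the process term $P$ and the channel-location annotations are erased from the sequent. The translation is exactly the one already exhibited in the text for the illustrative cases: a typing sequent $\seqtypeA{\mathcal{A}}{\Delta}{P}{\TYPELOC{x}{b}{a}{A}}$ is mapped to the \sellU\ sequent $\dseqA{\overline{\Delta}}{\cdot}{\BOXF{a.w}{A}}$, where $\overline{\Delta}$ collects the underlying formulas $\BOXF{w_i}{A_i}$ of the typed channels in $\Delta$, and the linear context of the dyadic sequent is always empty because — as observed right after Definition \ref{def:processes-types} — every type is marked with an outermost subexponential. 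Under this reading, the promotion side-condition built into the right rules (the use of $\Gamma^{\succeq b}$, $\Upsilon^{\succeq \unb b}$) is precisely the proviso of $!_R$, and the subexponential signature $\Sigma=(I,\preceq,U)$ defined in the previous subsection is upwardly closed w.r.t. $U$, so Theorem \ref{th:cut} applies and the cut rule remains available in the simulating system.

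First I would fix the translation on types: by induction on the structure of the pre-type $f$ one unfolds $\BOXF{w}{f}$ according to Definition \ref{def:processes-types}, checking that every clause introduces exactly one subexponential bang and that the location-sequence $w$ carries enough data to recover all inner bangs; this is the bookkeeping needed to state the macro-rule simulations precisely. Then, for each rule of the type system I would give the corresponding block of \sellU\ rules, as already done for $\one$, $\otimes$, $\limp$ and $\nbang{c}{}$ in the excerpt. The additive rules $\oplus_L$, $\oplus_{Ri}$, $\with_{Li}$, $\with_R$ translate literally to $\plus_L$, $\plus_{Ri}$, $\with_{Li}$, $\with_R$ preceded by $!_L$/$store$ on the left and by $!_R$ on the right, just as for $\otimes$. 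The quantifier rules $\existsLoc_L$, $\existsLoc_R$ (and the dual $\forallLoc$ rules in the appendix) translate to the \sellU\ rules $\existsLoc_L$, $\existsLoc_R$ with the type $a\sqcap b$ coming from the signature clause $a\sqcap b\preceq a$ and $a\sqcap b\preceq b$, again wrapped by $!_L$/$store$ or $!_R$; the freshness of $\alpha$ and the extension of $\mathcal{A}$ by $\Typeloc{\alpha}{a\sqcap b}$ match the eigenvariable discipline of $\existsLoc_L$. The unbounded rules $copy$ and $!^+_R$ translate to the dyadic $copy$ and $!_R$ rules applied to the unbounded subexponential $\unb a$ (respectively $\unb b$), using $a\preceq\unb a$ to justify the dereliction-style unfolding $\nbang{\unb a}{\nbang a A}\vdash\dots$ that is hidden inside the $copy$ macro.

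The step I expect to be the main obstacle is the right-rule cases, specifically making the $\Gamma^{\succeq b}$ (and $\Upsilon^{\succeq\unb b}$) manipulation line up with the single proviso of $!_R$. The subtlety, already flagged in the footnote after the $\otimes_R$ rule, is that $\Delta^{\succeq b}$ is only defined when $\Delta$ contains no linear subexponential unrelated to $b$, and in that case $\Delta=\Delta^{\succeq b}$; so I must argue that whenever a right typing rule is applicable its conclusion already satisfies this, hence the $!_R$ proviso is met and no genuine weakening of linear formulas is needed. Concretely, for $\otimes_R$ and $\limp_L$ I also have to account for the splitting of the linear context between the two premises, which in the dyadic system is handled by the context-splitting in $\tensor_R$/$\limp_L$ of Figure \ref{fig:sellU}; combined with the observation $\Delta=\Delta^{\succeq b}$ this makes the two pictures coincide. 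Once these context-management points are checked, each case is a finite derivation, so every typing rule is admissible and the proposition follows. I would present one representative left case, one representative right case, and the $copy$/$!^+_R$ case in full, and remark that the remaining cases are analogous.
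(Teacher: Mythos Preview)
Your proposal is correct and follows essentially the same approach as the paper: the paper's ``proof'' consists precisely of the macro-rule derivations it displays inline for $\otimes$ and for $\nbang{c}{}$ (erasing process terms and channel labels, then unfolding each typing rule into a short block of $!_L/\textit{store}/!_R$ plus the corresponding \sellU\ connective rule), after which it simply states the proposition with the remark that the remaining cases are analogous. Your write-up is a more systematic elaboration of exactly this scheme; the only superfluous ingredient is the appeal to Theorem~\ref{th:cut}, since the macro derivations are cut-free and the admissibility of Cut for the typing system is handled separately in Theorem~\ref{th:cut-elim}.
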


\subsection{Process Reduction,  Cut-Elimination and Identity Expansion}\label{sec:cut}
Consider the  cut-rule below:
\[\infer[Cut]{\seqtype{\Upsilon,\Delta,\Delta'}{\pinew{x}{(P \parallel Q)}}{T}}{
 \deduce{\seqtype{\Upsilon,\Delta}{P}{\TYPELOC{x}{b}{a}{A}}}{}
 &
 \deduce{\seqtype{\Upsilon,\Delta',\TYPELOC{x}{b}{a}{A}}{Q}{T}}{}
}
\]

 The following theorem 
shows that this rule is admissible and, more interestingly, it shows that proof reductions correspond  to reductions in processes. 

\begin{theorem}[Cut-elimination]\label{th:cut-elim}
The cut rule above is admissible. 
\end{theorem}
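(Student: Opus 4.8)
The plan is to prove admissibility of the process-level cut rule by reduction to Theorem~\ref{th:cut}, the cut-elimination theorem for \sellU, exploiting the macro-rule correspondence already exhibited for $\tensor$, $\lolli$, and the subexponential rules (and asserted in general by Proposition~\ref{prop:admissible}). First I would erase the process annotations and channel labels from a typing derivation $\seqtype{\Upsilon,\Delta}{P}{\TYPELOC{x}{b}{a}{A}}$ and check that what remains is a bona fide \sellU\ derivation of $\dseq{\overline{\Upsilon},\overline{\Delta}}{\cdot}{\BOXF{w}{A}}$, where $\overline{(\cdot)}$ sends each typed channel $\TYPELOC{z}{c}{d}{B}$ to the \sellU\ formula $\BOXF{w'}{B}$ carried by its (inferable) location-sequence; this is the content of Proposition~\ref{prop:admissible} read in the forward direction. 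Applying the \sellU\ $Cut$ rule to the two erased derivations and invoking Theorem~\ref{th:cut} yields a cut-free \sellU\ proof $\pi$ of the conclusion sequent. The remaining work is to \emph{lift} $\pi$ back to a process typing derivation, i.e.\ to decorate every sequent of $\pi$ with a process term so that the result is a valid derivation in the type system of Fig.~\ref{fig-type-sys}, and in the process to read off a chain of reductions from $\pinew{x}{(P\parallel Q)}$ to the term decorating the root of $\pi$.

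The key steps, in order: (1) establish the erasure/lifting dictionary precisely --- each macro rule of the type system corresponds to a fixed block of \sellU\ rules ($!_L/\mathit{store}/\tensor_L$ on the left, $!_R/\tensor_R$ on the right, etc.), so a cut-free \sellU\ proof built only from such blocks is uniquely re-annotatable; (2) carry out the standard cut-elimination induction on the pair (cut-formula complexity, sum of derivation heights), but track the process reductions induced by each commuting step. The principal cases are where the cut formula is introduced by a right rule in $P$'s derivation and the dual left rule in $Q$'s derivation: $\tensor_R$ against $\tensor_L$ produces the $\redrule{Comm}$ step, $\lolli_R$ against $\lolli_L$ likewise, $\oplus_{Ri}$ against $\oplus_L$ produces $\redrule{LCase}/\redrule{RCase}$, $\bang_R$ against $\bang_L$ produces $\redrule{Move}$, $\existsLoc_R$ against $\existsLoc_L$ produces $\redrule{LocOut}$, the $!^+_R$ against $\mathit{copy}$ case produces $\redrule{!Comm}$, and the atomic/identity case produces $\redrule{Id}$. (3) For each commuting-conversion case (cut against a rule not touching the cut formula), check that the process side is matched by $\equiv$ (the structural-congruence identities for $\boxp{a}{\cdot}$ distributing over $\parallel$ and $\nu$, and the $\nu$-scope laws) or by $\redrule{Res}/\redrule{Par}$ closure, so no genuine reduction is needed there. (4) Finally conclude that the cut-free derivation produced by Theorem~\ref{th:cut}, once re-annotated, types a process $R$ with $\pinew{x}{(P\parallel Q)}\rightarrow^{*} R$, which is exactly admissibility together with the subject-reduction correspondence.

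The main obstacle I expect is bookkeeping the location data through the commuting conversions: the \sellU\ promotion side-condition forces the $\Gamma^{\succeq b}$ restrictions in the right rules, and when a cut is permuted past a right rule one must verify that the cut formula's index is compatible with the promotion (otherwise the permuted derivation is ill-formed). Concretely, the cut formula $\TYPELOC{x}{b}{a}{A}$ carries location $a$, and to push the cut above a $\tensor_R$ or $\bang_R$ in $Q$ one needs $b' \preceq a$ where $b'$ is the location of that right rule's conclusion channel; this is guaranteed because $Q$ was well-typed and $x$ appears in its context, but it must be spelled out using the footnoted fact that a defined $\Delta^{\succeq b}$ equals $\Delta$. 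A secondary subtlety is the $\redrule{Id}$ case: when the cut formula is atomic and introduced on one side by the mediator $\pimed{x}{a}{b}{y}$ (the process reading of the $I$ axiom, whose role is clarified by Theorem~\ref{th:id-exp}), one must check that contracting the cut with the axiom yields exactly the $\redrule{Id}$ reduction $\pinew{x}{(\boxp{a}{\pimed{x}{a}{b}{y}}\parallel P)}\rightarrow P\{y/x\}$ up to $\equiv$, including the mediator-symmetry congruence $\boxp{a}{\pimed{x}{a}{b}{y}}\equiv\boxp{b}{\pimed{y}{b}{a}{x}}$. Everything else is the routine linear-logic cut-elimination argument, which Theorem~\ref{th:cut} already supplies; the contribution here is purely the decoration.
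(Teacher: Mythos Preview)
Your proposal conflates two strategies. The opening paragraph promises a \emph{reduction} to Theorem~\ref{th:cut}: erase processes, apply the black-box \sellU\ cut-elimination, then lift the resulting cut-free proof $\pi$ back to a decorated typing derivation. This lifting step is where the reduction breaks. Theorem~\ref{th:cut} is only an existence statement; it says nothing about the \emph{shape} of $\pi$, and in particular there is no guarantee that $\pi$ decomposes into the macro-rule blocks you describe in step~(1). A generic cut-free \sellU\ proof of the erased sequent may interleave $!_L$, $\mathit{store}$, and connective rules in orders that match no typing rule of Fig.~\ref{fig-type-sys}, so ``uniquely re-annotatable'' is not justified. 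And even if $\pi$ happened to be re-annotatable, a black-box appeal to Theorem~\ref{th:cut} gives you no access to the intermediate rewriting steps, so you cannot ``read off a chain of reductions from $\pinew{x}{(P\parallel Q)}$'' from $\pi$ alone.

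Your step~(2), by contrast, abandons the reduction and simply redoes the cut-elimination induction directly on the type system, tracking the process term at each principal reduction and commuting conversion. That is correct, and it is exactly what the paper does: the paper's proof exhibits the principal-case transformations explicitly (it shows $\lolli_R$ cut against $\lolli_L$ producing the $\redrule{Comm}$-shaped reduct, and $\bang_R$ against $\bang_L$ producing the $\redrule{Move}$-shaped reduct), invoking weakening on $\Upsilon$, the footnoted identity $\Delta=\Delta^{\succeq a}$, and a Substitution Lemma for the renamed bound channel---all of which you correctly anticipate. Once you commit to that direct induction, the erasure to \sellU\ and the appeal to Theorem~\ref{th:cut} are superfluous; at best Theorem~\ref{th:cut} reassures you that the underlying logical measure decreases, but it does none of the process-level work. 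So drop the reduction framing: what you actually need, and what the paper supplies, is the direct case analysis of your steps~(2)--(4).
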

\begin{proof}
  Let us  present some of the principal cases.

\begin{center}
\resizebox{.95\textwidth}{!}{
$
\begin{array}{c}
\infer[Cut]{\seqtype{\Upsilon, \Delta,
  \Delta_1,\Delta_2}{\pinew{x}{(\boxp{a}{\piinput{x}{b}{y}{p}} \parallel
  \boxp{b}{\pioutput{x}{a}{w}{(q \parallel q')}})}}{T}}{
\infer[\limp_R]{\seqtype{\Upsilon,\Delta}{\boxp{a}{\piinput{x}{b}{y}{p}}}{\TYPELOC{x}{b}{a}{A \limp B}}}{
 \deduce{\seqtype{\Upsilon^{\succeq a},\Delta^{\succeq a},\TYPELOC{y}{a}{b}{A}}{\boxp{a}{{p}}}{\TYPELOC{x}{b}{a}{B}}}{\Pi}
}
&
\infer[\limp_L]{\seqtype{\Upsilon,\Delta_1,\Delta_2,\TYPELOC{x}{b}{a}{A\limp
    B}}{\boxp{b}{\pioutput{x}{a}{w}{(q \parallel q')}}}{T}}{
 \deduce{\seqtype{\Upsilon,\Delta_1}{\boxp{b}{q}}{\TYPELOC{w}{a}{b}{A}}}{}
 &
 \deduce{\seqtype{\Upsilon,\Delta_2,\TYPELOC{x}{b}{a}{B}}{\boxp{b}{q'}}{T}}{}
}
}
\\\\
\rightsquigarrow
\\\\
\infer[{Cut}]{\seqtype{\Upsilon,\Delta,\Delta_1,\Delta_2}{\pinew{w}{(\boxp{b}{q} \parallel \pinew{x}{(\boxp{a}{p\{w/y\}}
  \parallel \boxp{b}{q'}))}}}{T}}{
 \deduce{\seqtype{\Upsilon,\Delta_1}{\boxp{b}{q}}{\TYPELOC{w}{a}{b}{A}}}{}
 &
 \infer[{Cut}]{\seqtype{\Upsilon,\Delta,\Delta_2,\TYPELOC{w}{a}{b}{A}}{\pinew{x}{(\boxp{a}{p\{w/y\}} \parallel \boxp{b}{q'})}}{T}}{
  \deduce{\seqtype{\Upsilon,\Delta,\TYPELOC{w}{a}{b}{A}}{\boxp{a}{{p\{w/y\}}}}{\TYPELOC{x}{b}{a}{B}}}{\Pi'}
  &
  \deduce{\seqtype{\Upsilon,\Delta_2,\TYPELOC{x}{b}{a}{B}}{\boxp{b}{q'}}{T}}{}
 }
}
\end{array}
$
}\end{center}

Proof $\Pi'$ can be obtained from $\Pi$
by using weakening on $\Upsilon$ (and noticing that $\Delta =
\Delta^{\succeq a}$ since $\Delta^{\succeq a}$ is defined) and a
Substitution Lemma, showing that $\Pi\{w/y\}$ is
  indeed a proof. 
The case for $\otimes$ is analogous. 
Let us  present also the case for migration:

\begin{center}
\resizebox{.9\textwidth}{!}{
$
\begin{array}{c}
\infer[{Cut}]{\seqtype{\Upsilon,\Delta,\Delta'}{\pinew{x}{(\boxp{a}{\pimoveout{x}{b}{w}{c}{p}}
  \parallel
  \boxp{b}{\pimovein{x}{a}{y}{c}{q}})}}{T}}{
 \infer[!_R]{\seqtype{\Upsilon,\Delta}{\boxp{a}{\pimoveout{x}{b}{w}{c}{p}}}{\TYPELOC{x}{b}{a}{\nbang{c}{A}}}}{
  \deduce{\seqtype{\Upsilon^{{\succeq} a},\Delta^{{\succeq} a}}{\boxp{c}{{p}}}{\TYPELOC{w}{c}{c}{{A}}}}{\Pi_p}
 }
 &
 \infer[!_L]{\seqtype{\Upsilon,\Delta',\TYPELOC{x}{b}{a}{\nbang{c}{A}}}{\boxp{b}{\pimovein{x}{a}{y}{c}{q}}}{T}}{
  \deduce{\seqtype{\Upsilon,\Delta',\TYPELOC{y}{c}{c}{A}}{\boxp{c}{{q}}}{T}}{\Pi_q}
 }
}
\\\\
\rightsquigarrow
\\
\infer[{Cut}]{\seqtype{\Upsilon,\Delta,\Delta'}{\pinew{w}{(\boxp{c}{p}
  \parallel \boxp{c}{q\{w/y\}})}}{T}}{
 \deduce{\seqtype{\Upsilon,\Delta}{\boxp{c}{{p}}}{\TYPELOC{w}{c}{c}{{A}}}}{\Pi_p'}
 &
 \deduce{\seqtype{\Upsilon,\Delta',\TYPELOC{w}{c}{c}{A}}{\boxp{c}{{q\{w/y\}}}}{T}}{\Pi_q'}
}
\end{array}
$}\end{center}

where $\Pi_p'$ is obtained from 
${\Pi_p}$ by weakening and 
$\Pi_q'$ from $\Pi_q$ by substitution. 
\end{proof}

\begin{corollary}[Type Preservation]
If $\seqtypeA{\mathcal{A}}{\Gamma}{P}{T}$ and $P \Rew{} Q$
then $\seqtypeA{\mathcal{A}}{\Gamma}{Q}{T}$.
\end{corollary}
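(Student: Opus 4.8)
The plan is to proceed by induction on the derivation of $P\Rew{}Q$, i.e.\ on the last rule of Fig.~\ref{fig:sos} used to derive the reduction, appealing throughout to the fact that both $\Rew{}$ and typability are closed under the structural congruence $\equiv$. It therefore suffices to treat the base rules $\redrule{Comm}$, $\redrule{!Comm}$, $\redrule{LocOut}$, $\redrule{Move}$, $\redrule{LCase}$, $\redrule{RCase}$, $\redrule{Id}$ and the two closure rules $\redrule{Res}$ and $\redrule{Par}$. For the closure rules the argument is routine: a typing derivation of $\pinew{y}P$ (resp.\ of $R\parallel P$) is, up to $\equiv$, obtained by an application of $Cut$ one of whose premises types $P$; replaying the inductive transformation on that premise and reassembling the $Cut$ (and, for $\redrule{Par}$, the ambient restriction) yields a derivation of $\pinew{y}Q$ (resp.\ $R\parallel Q$) with the same signature, context and right type.

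For the base rules the key ingredient is an \emph{inversion} (generation) lemma stated modulo $\equiv$: whenever the left-hand side of one of these rules occurs well-typed, the channel on which the two partner pre-processes synchronise is necessarily bound by a $Cut$, and the two immediate subderivations of that $Cut$ end with matching right- and left-introduction rules for the top connective of the session type assigned to that channel. Concretely, $\redrule{Comm}$ corresponds to a cut on a type of the form $A\tensor B$ or $A\limp B$ (both realise an output/input handshake, via the $\tensor_R/\tensor_L$ or the $\limp_L/\limp_R$ pairs); $\redrule{LCase}$ and $\redrule{RCase}$ correspond to a cut on $A\oplus B$ or $A\with B$ (via $\oplus_{Ri}/\oplus_L$ or $\with_R/\with_{Li}$); $\redrule{Move}$ to a cut on $\nbang{c}A$ (via $!_R/!_L$); $\redrule{LocOut}$ to a cut on a subexponential-quantified type (via the $\existsLoc$ or $\forallLoc$ pair); and $\redrule{!Comm}$ to a cut on an unbounded type, matching a $!^+_R$ against a \emph{copy}. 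This is exactly the \emph{principal cut} configuration analysed in the proof of Theorem~\ref{th:cut-elim}: each such configuration is rewritten there into a derivation built from smaller cuts whose process term is precisely the right-hand side of the corresponding reduction rule, and whose end-sequent is unchanged (the signature $\mathcal{A}$, the context $\Gamma=\Upsilon,\Delta,\Delta'$ with the context splittings and $\succeq$-restrictions discharged as in that proof, and the right type $T$). Performing that rewriting \emph{in place} inside the derivation of $P$ thus produces a derivation of $Q$ with the required end-sequent. The case $\redrule{Id}$ is the cut-against-identity/copycat case, handled by the substitution lemma already used in Theorem~\ref{th:cut-elim} (substituting $y$ for $x$ throughout the derivation of $P$, which the identity expansion of \S\ref{th:id-exp} justifies).

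The main obstacle is essentially bookkeeping rather than conceptual. A redex of Fig.~\ref{fig:sos} may sit arbitrarily deep inside the derivation of $P$ and may be exposed only after rearranging $P$ with the congruence identities (pushing a $\nu$ outwards, splitting $\boxp{a}{p\parallel q}$, etc.), so the inversion lemma must genuinely be formulated up to $\equiv$ and one must be careful that the $Cut$ that binds the communicated channel is the one whose premises are the two partner derivations. Moreover, as in Theorem~\ref{th:cut-elim}, for each base rule one has to check that the contexts recombine correctly: that $\Delta=\Delta^{\succeq b}$ whenever $\Delta^{\succeq b}$ is defined (so that no unrelated linear resource is lost or duplicated), that the unbounded context may be weakened or contracted as needed for $\redrule{!Comm}$, and that the fresh subexponential typing $\Typeloc{\alpha}{a\sqcap b}$ introduced by $\existsLoc_L$ is instantiated consistently for $\redrule{LocOut}$. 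All of these are exactly the verifications already carried out for the principal cases in the proof of the cut-elimination theorem, so no new work beyond Theorem~\ref{th:cut-elim} is required.
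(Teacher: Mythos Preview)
Your proposal is correct and follows essentially the same approach as the paper: induction on the reduction derivation, identifying that (up to $\equiv$) a typed redex is a $Cut$ whose premises introduce dual connectives, so that the principal-cut reductions of Theorem~\ref{th:cut-elim} yield a derivation of the reduct with the same end-sequent. You are in fact more thorough than the paper, which only lists the cases $\redrule{Comm}$, $\redrule{LocOut}$, $\redrule{Move}$, $\redrule{LCase}$, $\redrule{RCase}$ and does not spell out $\redrule{!Comm}$, $\redrule{Id}$, or the closure rules $\redrule{Res}$/$\redrule{Par}$.
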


\begin{proof}
  By induction on $P \Rew{} Q$. Since $P$ is typed, the {unique case is}
for $P$ of the form $\pinew{x}{(P_1 \parallel P_2)}$ where  $P_1
\parallel P_2 \Rew{r} Q'$, $r \in \{\redrule{Comm}, \redrule{LocOut},
\redrule{Move}, \redrule{LCase}, \redrule{RCase}\}$ and $Q =
\pinew{x}{Q'}$. Therefore, the last step in a typing derivation of
$\seqtypeA{\mathcal{A}}{\Gamma}{P}{T}$ is the $(Cut)$ rule and the
result holds by the Cut Elimination Theorem \ref{th:cut-elim}. 
\end{proof}

Another corollary from Theorem \ref{sec:cut} is a global
  progress property.

\begin{corollary}[Global Progress]
If $\seqtypeA{\mathcal{A}}{\cdot }{P}{\TYPELOC{x}{a}{c}{1}}$ then
 either $P \equiv \boxp{c}{\zero}$ or  exists $Q$ s.t. $P\to Q$.
\end{corollary}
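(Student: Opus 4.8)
The plan is to follow the standard route for logic-based session type systems~\cite{DBLP:journals/mscs/CairesPT16}: an induction on the typing derivation of $P$ that appeals to inversion on the typing rules of Fig.~\ref{fig-type-sys}, supported by an auxiliary \emph{liveness} lemma pinning down the shape of non-reducible typed processes. Call a process with $\seqtypeA{\mathcal{A}}{\Gamma}{R}{\TYPELOC{z}{e}{d}{B}}$ \emph{poised} when, up to $\equiv$, every top-level thread of $R$ is $\boxp{d}{\zero}$ (when its type is $\one$), a persistent input, or a guarded action prefix whose subject is $z$ or one of the channels declared in $\Gamma$. The lemma to prove first is: \emph{every typed process is either poised or reduces}. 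Its proof is a routine induction on the typing derivation, one case per rule: each right rule introducing an action prefix ($\otimes_R,\limp_R,\oplus_{Ri},\with_R,\bang_R,!^+_R,\existsLoc_R,\forallLoc_R$) and the dual left rules make $R$ immediately poised on $z$, resp.\ on a channel of $\Gamma$; $\one_R$ yields $R\equiv\boxp{d}{\zero}$; $\one_L$, the mediator rule and \textit{copy} either leave $R$ poised or supply a top-level prefix; and the $Cut$ case uses the induction hypothesis on both premises, as in the main argument below.

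Given the lemma, the statement follows. Since the left context is empty and the conclusion type is $\one$, inversion leaves only two options for the last rule. If it is $\one_R$, then $P\equiv\boxp{c}{\zero}$, the first disjunct. If it is $Cut$, then $P=\pinew{y}{(P_1\parallel P_2)}$ with $\seqtype{\cdot}{P_1}{\TYPELOC{y}{b}{d}{A}}$ and $\seqtype{\TYPELOC{y}{b}{d}{A}}{P_2}{\TYPELOC{x}{a}{c}{\one}}$. Apply the liveness lemma to each. If $P_1$ or $P_2$ reduces, then $P\to Q$ by $\redrule{Res}$ and $\redrule{Par}$ (closing under $\equiv$). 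Otherwise both are poised; as $P_1$ has empty left context, its only usable subject is $y$, so $P_1$ is, up to $\equiv$, a guarded prefix on $y$ at location $d$ (or a persistent input on $y$), while $P_2$ is poised on $y$ or on $x$. If $P_2$ is poised on $y$, then $P_1$ and $P_2$ exhibit \emph{dual} top-level prefixes on $y$, and the crucial point is that the typing rules maintain an invariant relating the two occurrences of $y$ — the provider acts in $d$ and the client reads from $d$; for the migration and location-passing prefixes the target site and the preorder side conditions (e.g.\ $c$ reachable from both endpoints for $\redrule{Move}$, and $l\preceq a\sqcap b$ for $\redrule{LocOut}$) agree on both sides — which is exactly the pattern required by the matching rule among $\redrule{Comm}$, $\redrule{!Comm}$, $\redrule{LocOut}$, $\redrule{Move}$, $\redrule{LCase}$, $\redrule{RCase}$, so a reduction fires and $P\to Q$. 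If instead $P_2$ is poised on $x$, then, $x$ having type $\one$, a structural analysis of $P_2$ (whose only held channel is $y$) gives $P_2\equiv\boxp{c}{\zero}$, possibly in parallel with a blocked persistent server on $y$; then $P_2$ does not actively consume $y$, which forces $y$ to be unbounded (linearity would otherwise be violated) and $P_1$ to be a persistent server with no client — the situation handled by the structural identity discussed below.

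The main obstacle is the location bookkeeping behind the dual-prefix step. Unlike the location-agnostic calculus of~\cite{DBLP:journals/mscs/CairesPT16}, two dual prefixes here carry explicit and a priori distinct sites, so a reduction can fire only when this pair of sites matches the side conditions of the corresponding rule in Fig.~\ref{fig:sos}; establishing once and for all the typing invariant that guarantees this match, and checking that it is preserved by \emph{every} typing rule — in particular by the genuinely new subexponential rules $\bang_L/\bang_R$ (migration), $\existsLoc/\forallLoc$ (location passing) and \textit{copy}$/!^+_R$ (replication), where subexponential indices are split, substituted and constrained by $\preceq$ — is where the real work lies; the same invariant also underpins the mediator-based proof of identity expansion. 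A secondary subtlety is the last case above: to make the dichotomy \textit{``$P\equiv\boxp{c}{\zero}$ or $P\to Q$''} literally exhaustive when the cut channel is unbounded, one must invoke a structural identity discarding a persistent server with no pending request (or, equivalently, refine the liveness lemma so as to absorb it); with that in place the induction goes through.
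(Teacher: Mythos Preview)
Your route is substantially more elaborate than the paper's. The paper gives essentially a one-line argument: with empty left context and conclusion type $\one$, inversion on the rules of Fig.~\ref{fig-type-sys} leaves only $\one_R$ (whence $P\equiv\boxp{c}{\zero}$) or $Cut$ as the last rule, and in the latter case the existence of a reduction is taken as an immediate consequence of Theorem~\ref{th:cut-elim}, whose principal cut-reduction steps were shown to coincide with process reductions. There is no auxiliary liveness lemma, no notion of poised process, and no explicit analysis of dual prefixes; all of that is folded into the appeal to cut-elimination. Your approach is the standard, more explicit one from~\cite{DBLP:journals/mscs/CairesPT16}; it buys a self-contained argument that does not rely on the reader reconstructing the link between cut-elimination and reduction, at the cost of the extra lemma and the location-invariant bookkeeping you describe.

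You also put your finger on a point the paper's terse argument does not touch: when the cut formula is unbounded, the provider may be a replicated server $\boxp{d}{!y(z).p}$ while the client simply weakens $y$ via $\one_R$, giving $\pinew{y}{(\boxp{d}{!y(z).p}\parallel\boxp{c}{\zero})}$, which under the rules in Fig.~\ref{fig:sos} neither reduces nor is $\equiv\boxp{c}{\zero}$. Your proposal flags this and suggests either a garbage-collection congruence for unused persistent servers or a refined liveness notion; neither is supplied by the paper, so this is a genuine boundary case that the corollary, read literally against the given operational semantics and structural congruence, leaves open. Your diagnosis is accurate; just be aware that the fix you gesture at is an addition to the system rather than something derivable from what is already there.
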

Note that, if $P \not\equiv \boxp{c}{\zero}$ then, by a simple
  inspection of the typing rules (see Fig.~\ref{fig-type-sys},
    App. \ref{app:typesystem}), $(Cut)$ is the only possibility as
  the last rule application holding such a typing.

\subsection{Identity Expansion}\label{sec:id-exp}

The copycat  rule  extends the one in~\cite{DBLP:conf/fossacs/ToninhoCP12} with locations: 
$$\infer[\textit{Id}]{\seqtype{\Gamma, \TYPELOC{x}{a}{b}{A}}{\boxp{a}{\pimed{x}{a}{b}{y}}}{\TYPELOC{y}{b}{a}{A}}}{}.$$

As shown in \cite{DBLP:journals/mscs/CairesPT16}, the identity expansion theorem 
reveals the existence of 
synchronous mediators
of the form $x\leftrightarrow y$, a copycat
process that plays the role of a link between   two channels \cite{DBLP:journals/tcs/Sangiorgi96a}. In our system, the same procedure reveals the existence of mediators between different locations. This means  that such processes implement the communication 
between the endpoints located at different sites. 
Remember that 
$\pimed{x}{a}{b}{y}$ is a process  that links the channel $x$ on location $b$ with the channel $y$ on location $a$. 

\begin{theorem}[Identity expansion]\label{th:id-exp}
For all type $A$ 
and locations $a {\preceq} b$, there exists 
a process $id(x,y)$ and a cut-free derivation of the judgment  $\seqtypeA{\mathcal{A}}{\TYPELOC{x}{a}{b}{A}}{\boxp{a}{id(x,y)}}{\TYPELOC{y}{b}{a}{A}}$
\end{theorem}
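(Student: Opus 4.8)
The plan is to prove this by induction on the structure of the type $A$, constructing the mediator process $id(x,y)$ compositionally from smaller mediators, exactly as in the Caires–Pfenning identity-expansion argument but now tracking locations. The base case is $A = \boxf{c}{\one}$ (or an atomic $\boxf{c}{\varp}$), where the mediator is simply the copycat/link pre-process $\pimed{x}{a}{b}{y}$ itself, typed by the $\textit{Id}$ rule; note the rule is stated for $\TYPELOC{x}{a}{b}{A}$ on the left and $\TYPELOC{y}{b}{a}{A}$ on the right with $a \preceq b$, which matches the statement. For the inductive cases one "unfolds" the link into a pair of dual communications on $x$ and $y$, using a left rule on the channel $x$ and the matching right rule on the output channel $y$, and then appeals to the induction hypothesis on the component types to fill the open leaves.

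**Concretely, for each connective I would do the following.** For $A = \boxf{c}{(A_1 \otimes A_2)}$: apply $\tensor_L$ on $\TYPELOC{x}{b}{c'}{A}$ (receiving a fresh $z$ of type $A_1$) and $\tensor_R$ on the right to output a fresh $w$; this produces two leaves, one asking for a mediator between $z$ and $w$ at type $A_1$ and one between $x$ and $y$ at type $A_2$, both discharged by the IH, and the resulting process is an input on $x$ followed by an output on $y$ spawning the two recursively-built mediators in parallel — i.e.\ $\piinput{x}{\cdots}{z}{\pioutput{y}{\cdots}{w}{(id_{A_1}(z,w) \parallel id_{A_2}(x,y))}}$, matching the $\redrule{Comm}$-style behaviour. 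The cases for $\limp$ is dual (output on $x$, input on $y$); $\oplus$ and $\with$ use the case/choice rules and yield mediators of the form $\picase{x}{\cdots}{\cdots}{\cdots}$ with $\piinl$/$\piinr$ on $y$ in each branch. For $A = \boxf{c}{(\nbang{d}{A'})}$ the mediator performs a migration: apply $\bang_L$ on $x$ and $\bang_R$ on $y$, which forces both continuations into location $d$ on a fresh local channel, and the open leaf is a mediator at type $A'$ \emph{inside location $d$} with both endpoints there, again closed by the IH; the process is $\pimovein{x}{\cdots}{z}{d}{(\ldots)}$ paired with $\pimoveout{y}{\cdots}{z'}{d}{(\ldots)}$. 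For the quantifiers $\existsLoc / \forallLoc$ one uses $\existsLoc_L$ (introducing a fresh subexponential variable $\alpha$ into $\mathcal{A}$) together with $\existsLoc_R$ instantiated at that same $\alpha$, so the location received on $x$ is forwarded on $y$; the IH is invoked with the extended signature. The replicated case $\boxf{\unb{b}}{\cdot}$ is handled using $!^+_R$ together with $\textit{copy}$ in the standard way.

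**The main obstacle** is bookkeeping the location annotations and the side conditions of the promotion-style right rules ($\tensor_R$, $\limp_R$, $\oplus_{R_i}$, $\with_R$, $\bang_R$, $\existsLoc_R$), which all require the context to be of the form $\Gamma^{\succeq b}$. In the identity expansion the left context is just the single channel $\TYPELOC{x}{a}{b}{A}$, so I must check at each step that this channel survives the $(\cdot)^{\succeq \, \cdot}$ restriction — this is exactly where the hypothesis $a \preceq b$ is used, and more generally one needs the subexponential preorder to behave well under the $a \sqcap b$ and $\unb{a}$ constructions (e.g.\ $a \preceq b$ propagates correctly through the migration and quantifier cases). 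A secondary subtlety is that in the definition of $\BOXF{w}{f}$ the location-sequence $w$ carries more structure than a single index, so "peeling off" a connective must simultaneously peel the corresponding layer of $w$; I would phrase the induction on the pre-type $f$ with the location-sequence $w$ as a parameter so that the recursive calls land on genuine sub-types $\BOXF{w'}{f'}$, and observe that the claimed annotations in the left/right rules (the $y^c_c$ in $\bang_L/\bang_R$, the $y^a_b$ versus $y^b_a$ in $\limp_L/\limp_R$, etc.) are precisely what make each unfolding step type-check. The resulting derivation is cut-free by construction, since every rule used is a logical (left or right) rule, not $Cut$.
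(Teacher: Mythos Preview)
Your proposal is essentially the paper's proof: induction on the structure of $A$, applying the left introduction rule on $x$ and the dual right rule on $y$ at each step, then closing the remaining leaves by the induction hypothesis. The paper spells out only the $\limp$ case (yielding the process $\boxp{a}{\piinput{y}{b}{z}{\pioutput{x}{b}{w}{(id(z,w)\parallel id(x,y))}}}$) and waves at the rest with ``similarly'', whereas you sketch every connective and make the role of the hypothesis $a\preceq b$ in the $(\cdot)^{\succeq a}$ side conditions explicit; that is an improvement in detail, not a change of strategy.

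One small discrepancy worth noting: for the unit the paper takes $id(x,y)=\zero$, derived by $\one_L$ followed by $\one_R$, rather than using the $\textit{Id}$ rule as you do. Since the whole point of identity expansion is to show that the copycat rule can be \emph{unfolded} for every non-atomic type, invoking $\textit{Id}$ at $\one$ is slightly against the spirit of the theorem (even if it yields a cut-free derivation). Also, in your $\nbang{d}$ case the phrase ``$\pimovein{x}{\cdots}{z}{d}{(\ldots)}$ paired with $\pimoveout{y}{\cdots}{z'}{d}{(\ldots)}$'' should be a single sequentially composed pre-process at location $a$, not two processes in parallel --- but that is a phrasing issue, not a conceptual one.
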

\begin{proof}
By induction on the structure of $A$. For $A=\one$, $id(x,y)=0$ 
and for an atomic proposition, 
$id(x,y)= \pimed{x}{a}{b}{y}$. Consider the case $A \limp B$:

\begin{center}
\resizebox{0.8\textwidth}{!}{$
\infer[\limp_R]{\seqtype{\TYPELOC{x}{a}{b}{A \limp B}}{\boxp{a}{\piinput{y}{b}{z}{\pioutput{x}{b}{w}{P}}}}{\TYPELOC{y}{b}{a}{A\limp B}}}{
 \infer[\limp_L]{\seqtype{\TYPELOC{x}{a}{b}{A \limp B}, \TYPELOC{z}{a}{b}{A}}{\boxp{a}{\pioutput{x}{b}{w}{P}}}{\TYPELOC{y}{b}{a}{B}}}{
  \deduce{\seqtype{ \TYPELOC{z}{a}{b}{A}}{\boxp{a}{id(z,w)}}{\TYPELOC{w}{b}{a}{A}}}{}
  &~~
  \deduce{\seqtype{\TYPELOC{x}{a}{b}{B}}{\boxp{a}{id(x,y)}}{\TYPELOC{y}{b}{a}{B}}}{}
 }
}
$}\end{center}
Here,  $P=(id(z,w) \parallel id(x,y))$.  What we observe is that the channels $x$ and $z$ located at $b$ 
are linked, respectively, to the channels 
$y$ and $w$ located at $a$. 
The other cases can be obtained similarly. 
\end{proof}

\noindent\textbf{Applications.} We conclude this section by describing some interesting behaviors that can be 
specified and verified with our system. 

\begin{example}[Storing Values]\label{ex:storing}
Process $P$ from Examples~\ref{ex:st.values}
  and~\ref{ex:st.values-red} is not typable
whenever $c \not\succeq sv$.
 Let $A = 1 \otimes 1$, then a type derivation for $P$ is of the form
 \[
 \small
\infer[Cut]{\seqtype{\Upsilon,\Delta}{\pinew{x}{(
\boxp{sv}{\piinput{x}{c}{y}{\pioutput{x}{c}{z}{\piinput{y}{c}{k}{\zero}}}}
\parallel
\boxp{c}{\pioutput{x}{sv}{y}{(\pioutput{y}{}{k}{\zero} \parallel \piinput{x}{sv}{z}{\zero})}}
)}}{T}}{
 \deduce{\seqtype{\Upsilon,\Delta}{\boxp{sv}{\piinput{x}{c}{y}{\pioutput{x}{c}{z}{\piinput{y}{c}{k}{\zero}}}}}{\TYPELOC{x}{c}{sv}{A
     \limp A}}}{\Pi}
 &
 \deduce{\seqtype{\Upsilon,\TYPELOC{x}{c}{sv}{A \limp A}}{\boxp{c}{\pioutput{x}{sv}{y}{(\pioutput{y}{}{k}{\zero} \parallel \piinput{x}{sv}{z}{\zero})}}}{T}}{\Psi}
}
\]
where $\Pi$ is of the form
\begin{center}
\resizebox{.8\textwidth}{!}{
$
\infer[\limp_R]{\seqtype{\Upsilon,\Delta}{\boxp{sv}{\piinput{x}{c}{y}{\pioutput{x}{c}{z}{\piinput{y}{c}{k}{\zero}}}}}{\TYPELOC{x}{c}{sv}{A
        \limp
        A}}}{\infer[\otimes_R]{\seqtype{\Upsilon^{\succeq sv},\Delta^{\succeq sv},\TYPELOC{y}{sv}{c}{A}}{\boxp{sv}{\pioutput{x}{c}{z}{\piinput{y}{c}{k}{\zero}}}}{\TYPELOC{x}{c}{sv}{A}}}{\seqtype{\Upsilon^{\succeq sv},\Delta^{\succeq
        sv}}{\boxp{sv}{\zero}}{\TYPELOC{z}{c}{sv}{1}} &
    \infer[\otimes_L]{\seqtype{\Upsilon^{\succeq sv}, \TYPELOC{y}{sv}{c}{A}}{\boxp{sv}{\piinput{y}{c}{k}{\zero}}}{\TYPELOC{x}{c}{sv}{1}}}{
    \infer[1_L]{\seqtype{\Upsilon^{\succeq sv},
        \TYPELOC{y}{sv}{c}{1},\TYPELOC{k}{sv}{c}{1}}{\boxp{sv}{\zero}}{\TYPELOC{x}{c}{sv}{1}}}{\infer[1_L]{\seqtype{\Upsilon^{\succeq
            sv},
          \TYPELOC{y}{sv}{c}{1}}{\boxp{sv}{\zero}}{\TYPELOC{x}{c}{sv}{1}}}{\infer[1_R]{\seqtype{\Upsilon^{\succeq
              sv}}{\boxp{sv}{\zero}}{\TYPELOC{x}{c}{sv}{1}}}{}}}}}}
$}\end{center}              
Note that the application of $\otimes_R$ is valid only if
$c \succeq sv$. Intuitively, the permission of a value sent by the
  client to be stored and ``used later''  by the server --in this
  example, the channel name $y$ used after the output
  $x \langle z \rangle$-- needs to be explicit in the subexponential signature.

In general, communicated values are either atomic data, of
  type $1$ and its consumption represented by the application of rule $1_L$; or data
  to be used in further interactions. In the first case, the
  use of $1_L$ does not impose any constraint on either location.
  In the second case, the subexponential signature must indicate that such
  further interactions are allowed, as illustrated above. 

\end{example}

\begin{example}\label{ex:quant}
We present an adaptation of the example introduced in
  \cite{DBLP:conf/concur/CairesPPT19}, representing a client-server communication
  defining a buying protocol where sensitive information, such as a
  credit card number, is communicated only after an agreed secure
  channel is established. In other words, we want to type
  $Sys := \pinew{x}{(\boxp{ws}{W} \parallel \boxp{c}{C})}$ such that
{\small
\[\infer[Cut]{\seqtype{\Upsilon}{\pinew{x}{(\boxp{ws}{W} \parallel
        \boxp{c}{C})}}{\TYPELOC{z}{ws}{c}{1}}}{ \deduce{\seqtype{\Upsilon}{\boxp{ws}{W}}{\TYPELOC{x}{c}{ws}{WT}}}{\Pi} & \deduce{\seqtype{\Upsilon,\TYPELOC{x}{c}{ws}{WT}}{\boxp{c}{C}}{\TYPELOC{z}{ws}{c}{1}}}{\Psi}}\]}
for (pre-)types
$
\begin{array}{lll@{\hspace{10mm}}lll}
WT & := & addC \limp  (\existsLoc
          _{\Typeloc{\alpha}{\secure}}.\nbang{\alpha}Pay) \with 1 &
Pay & := & CCNum \limp (\forallLoc
           _{\Typeloc{\beta}{\alpha}}.\nbang{\beta}{Rcpt}) \oplus 1
\end{array}
$

Here,  $addC$ denotes an item information added by the client in the online
  chart and $CCNum$ denotes the credit card number sent by the
  client. The intended meaning of  \ $WT$ is that, once the item
  information is processed, the client can choose to either close the
  session or go to payment. The latter depends on the migration of both, client and
  server, to a secure location ($l_s$ below)  chosen by the server but agreed by both
sides of the communication. Note that the credit card number would be sent
only if the (channel) migration succeeds. 
In the end, after receiving $CCNum$,
the processes migrate to yet another location ($l_b$) to complete the payment. 
Hence, we  consider the following set of locations:  $I = \{c
\prec ws, \secure \prec ws, \secure \prec c, l_s \prec \secure, l_b \prec l_s \}$
and

$
\begin{array}{lll@{\hspace{10mm}}lll}
C & := &
         \pioutput{x}{ws}{cat}{\piinl{x}{ws}{\piwin{x}{ws}{\alpha}{C_1}}}
  &
 W & :=
  &\piinput{x}{\cdaniel{X}{c}}{cat}{\picase{x}{Y}{\pioutputD{x}{\cdaniel{\secure}{c}}{l_s}{W_1}}{\zero}}\\
  
  C_1 & := &
             \pimovein{x}{ws}{x'}{\alpha}{\pioutput{x'}{\alpha}{n}{C_2}}
  &
  W_1 & := &
             \pimoveout{x}{Z}{x'}{l_s}{\piinput{x'}{U}{n}{W_2}}\\
  
 C_2 & := &
            \picase{x'}{\alpha}{\pioutputD{x'}{\alpha}{l_b}{\pimovein{x'}{\alpha}{y}{l_b}{\zero}}}{\zero}  
  &
 W_2 & := & \piinl{x'}{V}{\piwin{x'}{\cdaniel{\bnk}{l_s}}{\beta}{\pimoveout{x'}{T}{y}{\beta}{\zero}}}
\end{array}
$

Here, $\secure$ is used as the type of locations known by
both client and server since $\secure \prec c, \secure \prec ws$.  $l_s$ is the location  of type $\secure$ ($l_s \prec \secure$) to be used by the server.
Moreover, $l_b$ (of type $l_s$)  
is the location chosen by the client to complete the payment. 

Derivation $\Pi$ takes the form 
\begin{center}
$
\infer[\limp_R]{\seqtype{\Upsilon}{\boxp{ws}{W}}{\TYPELOC{x}{c}{ws}{WT}}}{
  \infer[1_L]{\seqtype{\Upsilon^{\succeq ws},\TYPELOC{cat}{ws}{c}{addC}}{\boxp{ws}{W'}}{\TYPELOC{x}{c}{ws}{WT'}}}{\infer[\with_R]{\seqtype{\Upsilon^{\succeq ws}}{\boxp{ws}{W'}}{\TYPELOC{x}{c}{ws}{WT'}}}{
  \infer[\existsLoc_R]{\seqtype{\Upsilon^{\succeq
      ws}}{\boxp{ws}{\piwout{x}{\cdaniel{\secure}{c}}{l_s}{W_1}}}{\TYPELOC{x}{c}{ws}{\existsLoc_{\Typeloc{\alpha}{\secure}}.\nbang{\alpha}Pay}}}{\infer[!_R]{\seqtype{\Upsilon^{\succeq
      ws}}{\boxp{ws}{W_1}}{\TYPELOC{x}{c}{ws}{\nbang{l_s}Pay\{l_s/\alpha\}}}}{\deduce{\seqtype{\Upsilon^{\succeq
      ws}}{\boxp{l_s}{W'_1}}{\TYPELOC{x'}{l_s}{l_s}{Pay\{l_s/\alpha\}}}}{\Pi'}}}
  & \infer[1_R]{\seqtype{\Upsilon^{\succeq
        ws}}{\boxp{ws}{\zero}}{\TYPELOC{x}{c}{ws}{1}}}{}}}}
$
\end{center}
Note that $\TYPELOC{cat}{ws}{c}{addC}$ needs to be consumed before
  the choice (rule $\with_R$).
  
Derivation $\Pi'$ is as follows: 
\begin{center}
$
\infer[\limp_R]{\seqtype{\Upsilon^{\succeq
      ws}}{\boxp{l_s}{W'_1}}{\TYPELOC{x'}{l_s}{l_s}{Pay\{l_s/\alpha\}}}}{\infer[\oplus_{R1}]{\seqtype{(\Upsilon^{\succeq
        ws})^{\succeq l_s},\TYPELOC{n}{l_s}{l_s}{CCNum}}{\boxp{l_s}{W_2}}{\TYPELOC{x'}{l_s}{l_s}{Pay'\{l_s/\alpha\}}}}{\infer[\forallLoc_R]{\seqtype{(\Upsilon^{\succeq
        ws})^{\succeq l_s},\TYPELOC{n}{l_s}{l_s}{CCNum}}{\boxp{l_s}{W'_2}}{\TYPELOC{x'}{l_s}{l_s}{\forallLoc_{\Typeloc{\beta}{{l_s}}}.\nbang{\beta}{Rcpt}}}}{\infer[!_R]{\seqtypeA{\beta
           \preceq l_s}{(\Upsilon^{\succeq
        ws})^{\succeq
          l_s},\TYPELOC{n}{l_s}{l_s}{CCNum}}{\boxp{l_s}{\pimoveout{x'}{T}{y}{\beta}{\zero}}}{\TYPELOC{x'}{l_s}{l_s}{\nbang{\beta}{Rcpt}}}}{\infer[1_L]{\seqtypeA{\beta
        \preceq{l_s}}{(\Upsilon^{\succeq
        ws})^{\succeq
          l_s},\TYPELOC{n}{l_s}{l_s}{CCNum}}{\boxp{\beta}{\zero}}{\TYPELOC{y}{\beta}{\beta}{Rcpt}}}{\infer[1_R]{\seqtypeA{\beta
        \preceq l_s}{(\Upsilon^{\succeq
        ws})^{\succeq
          l_s}}{\boxp{\beta}{\zero}}{\TYPELOC{y}{\beta}{\beta}{Rcpt}}}{}}}}}}
$
\end{center}
\end{example}
\label{sec:examples}

\section{Related Work and Concluding Remarks}\label{sec:discussion}
This paper originates from the work of 
Caires et al. \cite{DBLP:conf/concur/CairesPPT19} where 
Hybrid Linear Logic (HyLL) was proposed as the foundation for domain aware session types. 
Let us compare the two systems from the point of view of the underlying logic 
and the resulting process calculi.

\paragraph{Logical foundation.} 
Hybrid Linear Logic (HyLL) is a conservative extension of intuitionistic LL where the truth judgments are labeled by worlds as in $A@w$, read as ``A holds at world w''. 
The reachability relation on worlds,  $\prec: W \times W$,  determines
when a world $w$ is accessible from  $v$ ($v \prec w)$.

Formulas in HyLL are built from the connectives of intuitionistic linear logic  plus the following ones:

$
\qquad F ::= ... \mid (A ~\at\ w) \mid \down w. A \mid \forall w. A \mid \exists w. A
$

where the world $w$ is bounded in the formulas
$\down w. A$, $\forall w. A$ and  $\exists w. A$. 

The sequent calculus
for HyLL \cite{DBLP:conf/types/DespeyrouxC13} uses dyadic sequents of
the form $\Gamma; \Delta \vdash C ~@~ w$ where
the unbounded ($\Gamma$) and linear ($\Delta$) contexts contain judgments of the form $A ~@~ w$. 
Let us introduce some of the rules as originally proposed in  \cite{DBLP:conf/types/DespeyrouxC13}. The initial rule enforces that atomic propositions 
must be located at the same world. The 
other rules are similar to those of intuitionistic LL: \\

$\small
\infer[{I}]{\Gamma ; \varp ~@~ w \vdash \varp ~@~ w}{}
\qquad
\infer[{ \otimes R}] {\Gamma ; \Delta, \Delta' \vdash A \otimes B ~@~ w}{\Gamma ; \Delta \vdash A ~@~ w  \quad  \Gamma ; \Delta' \vdash B ~@~ w}    
\qquad \infer[{\otimes L}]{\Gamma ; \Delta, A \otimes B ~@~ u \vdash C ~@~ w}{\Gamma ; \Delta, A ~@~ u, B ~@~ u \vdash C ~@~ w}    
$

The interesting part is the rules for the hybrid connectives: \\

\noindent\resizebox{\textwidth}{!}{
$
\begin{array}{c}
  \infer[{\at~ R}] {\Gamma ; \Delta \vdash (A ~\at~ u) ~@~ w}{\Gamma ; \Delta \vdash A ~@~ u} 
 \qquad
  \infer[{\at~ L}]{\Gamma ; \Delta, (A ~\at~ u) ~@~ w \vdash C ~@~ v} {\Gamma ; \Delta, A ~@~ u \vdash C ~@~ v}
\qquad
  \infer[{\downarrow R}]{\Gamma ; \Delta \vdash \downarrow u. A ~@~ w}{\Gamma ; \Delta \vdash A [w / u] ~@~ w}    
  \qquad
  \infer[{\downarrow L}]{\Gamma ; \Delta, \downarrow u. A ~@~ w \vdash C ~@~ v}{\Gamma ; \Delta, A [w / u] ~@~ w \vdash C ~@~ v}
  \\\\
  \infer[{\forall R}]{\Gamma ; \Delta \vdash \forall \alpha.~ A ~@~ w}{\Gamma ; \Delta \vdash A ~@~ w}   
  \qquad
  \infer[{\forall L}]{\Gamma ; \Delta, \forall \alpha.~ A ~@~ u \vdash C ~@~ w}{\Gamma ; \Delta, A [\tau/ \alpha] ~@~ u \vdash C ~@~ w}
  \end{array}
$
}

The formula $(A ~\at~ u)$ is a \emph{mobile} proposition:
it carries  the world at which it is true.
The \emph{localization} ($\downarrow$)  binds a name for the current world where
the proposition is true.
Hence, the formula  $(\downarrow u.A)~@~w$ 
\emph{fixes} the world $w$ in $A$  by substituting the occurrences of $u$ with $w$ in $A$. 
The rules for the universal quantifier are as expected. The rules for existential quantification on worlds  are dual and omitted. 

Note that the rules for the hybrid connectives  are not constrained by the  context $\Gamma ; \Delta$. More precisely, the rule $\at_R$
is able to move the formula $A$ from $w$ to $u$ without checking if $u$ is reachable from the worlds in the context $\Gamma;\Delta$ (recall that \sellU
checks such condition in the promotion rule).
Similarly for the quantifiers:  $\forall_L$ substitutes $\alpha$ with \emph{any} $\tau$ (related or not with $u$, $w$, etc). 
This  observation was fundamental 
in the expressiveness study in   \cite{DBLP:journals/mscs/ChaudhuriDOP19}, where it is shown that the inferences rules of HyLL can be adequately encoded in (vanilla) linear logic. Moreover, a second encoding of HyLL into \sellU\  showed  that  a flat subexponential structure  is enough to encode any world structure in HyLL. 
This is partially explained by the fact that 
the reachability relation  on worlds
does not play any role in HyLL's inference rules. 

With the aim of controlling migration of formulas between worlds,  \cite{DBLP:conf/concur/CairesPPT19} adds some side conditions to the system for HyLL. 
For instance, the rules for the connective $\at$ are extended as follows: 

\begin{itemize}
 \item $\at R$. Two side conditions are added: $w \prec  u$ and $u \prec^* \Delta$. The second conditions uses the reflexive and transitive closure of $\prec$ and tells us that for any formula $G~@~x$ in the context $\Delta$, it must be the case that $w \prec^* x$. 
 \item $\at L$. No side condition is added but, bottom up,  the rule extends $\prec$
 with the new entry $w \prec u$. 
\end{itemize}

These side conditions (and others imposed in the rest of the rules) guarantee that the system produces only sequents of the form 
$\Gamma;\Delta \vdash A~@~w$ where
all the worlds in $\Delta$ are reachable from  $w$. 
Note also that, due to the new restriction on $\at R$, the movement of formulas on the right side of the sequent can take place only to an \emph{immediate successor} world ($w \prec u$). 

\paragraph{Processes.}
The 
language of processes considered in 
\cite{DBLP:conf/concur/CairesPPT19} does not make explicit the location of an agent. In fact,  what we call a pre-process here, is a process in \cite{DBLP:conf/concur/CairesPPT19}. 
This implies that,   when a process migrates, the information about its location cannot be determined by simply inspecting the terms/processes. 
(We are not aware whether such information can be recovered somehow in the system in \cite{DBLP:conf/concur/CairesPPT19}).
For instance, consider the following derivations in both systems: 
\[
\begin{array}{lllccccc}
\mbox{System in \cite{DBLP:conf/concur/CairesPPT19}:} & & 
\pimoveoutHy{x}{y}{w}{p}& \parallel &
\pimoveinHy{x}{z}{w}{q} &\redi{}&
p \parallel q\{y/z\}\\
\mbox{Our system:} & &
\boxp{a}{\pimoveout{x}{b}{y}{w}{p}} & \parallel &
\boxp{b}{\pimovein{x}{a}{z}{w}{q}} &\redi{}&
\boxp{w}{p \parallel q\{y/z\}}
\end{array}
\]

\paragraph{Type System.}
Modifying the rules of the system for HyLL allows for controlling
the context when formulas are moved between worlds. However, extending the signature 
in $\at L$
is problematic.
The cut-elimination procedure 
reveals that such extension is not necessary but the identity-expansion theorem requires it. 
In Appendix~\ref{app-prop-ex}, we elaborate more about this potential problem in the interpretation
proposed in \cite{DBLP:conf/concur/CairesPPT19} of locations as worlds in HyLL.

Our type system is founded on a 
logic where the promotion rule controls the resources that can be used during a
proof. Moreover, consuming resources from the context, i.e.,
performing dereliction, is not subject to any restriction. The act of
 communicating locations in our proposal seems also 
to be more natural: a
  communicated location is accessible from  both endpoints of the communication (notation $a\sqcap b$). 
 As shown in Section \ref{sec:examples}, our type system allows for controlling the information that can be stored in a ``non-trusted'' (i.e., non-related)  site and also, to correctly reproduce the example in \cite{DBLP:conf/concur/CairesPPT19}. Finally, as stated before, the syntax of processes reflects more precisely the intention of the modeler. 
 
 As shown in different works (see e.g.,
 \cite{DBLP:journals/mscs/ChaudhuriDOP19,DBLP:journals/tcs/NigamOP17,DBLP:journals/tcs/OlartePN15,DBLP:journals/tplp/PimentelON14}),
the subexponentials in \sell\ can be interpreted in
 different ways. For instance, they can be used to denote time-units,
 spaces of computation, the epistemic state of agents, preferences,
 costs, etc. It would be interesting to see if the interpretation
 proposed here can be seen through the lenses of those modalities. For
 instance, instead of interpreting
 $\boxp{a}{p}$ as ``$p$ {located at} $a$'', we may interpret $a$ as a deadline for the execution of $p$ (or ``$p$ can use a resource in the following $a$ time-units''). This may lead to  a system with 
 session duration/expiration  with strong foundation in logical methods.

\bibliographystyle{eptcs}

\appendix

\section{Locations as HyLL worlds} \label{app-prop-ex} 

Let  us consider the 
following sequent
$
\Gamma ; x:(A~\at\, \secretp)@w \vdash C :: z:T@c
$ in 
\cite[Section 3]{DBLP:conf/concur/CairesPPT19}, 
where $w$ means \emph{web store} and $c$ stands for
\emph{client}. 
 It is assumed in \cite{DBLP:conf/concur/CairesPPT19} that the world signature, at this point, only contains the entry 
$c \prec w$. Since $c \not\prec^* \secretp$,
first the server has to migrate to the secure location $\secretp$
(using $\at L$) and, only after that, the client is able to use a service from that location. According to op. cite, ``\emph{this ensures, e.g., that a client cannot exploit the payment platform of the web store by accessing the trusted domain  in unforeseen ways}''. 
This behavior is not reflected by the cut-elimination procedure as the following proposition shows:
\begin{proposition}
Let $C=\pimoveinHy{x}{y}{\secretp}{\zero}$, $W=\pimoveoutHy{x}{y}{\secretp}{\zero}$
and $S = \pinew{x}{(W \parallel C)}$. 
Hence, the sequent $\cdot;\cdot \vdash S :: - : \one @c$ is provable iff
$c \prec w$, $c \prec \secretp$ and $w \prec \secretp$.
\end{proposition}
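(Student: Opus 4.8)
The plan is to prove the two implications after first pinning down the shape that \emph{any} typing derivation of $\cdot;\cdot\vdash S :: - : \one @c$ is forced to have; once that shape is fixed, provability collapses to a short list of side conditions and well-formedness provisos on the world signature, and the claim is that these are precisely $c\prec w$, $c\prec\secretp$ and $w\prec\secretp$.

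For the forward direction, I would first observe that, since $S=\pinew{x}{(W\parallel C)}$ is a restriction of a parallel composition and neither $W$ nor $C$ is itself a restriction, a parallel composition or $\zero$, inspecting the process annotations of the rules (exactly as in the proof of global progress) shows that the last inference must be the cut rule, splitting the derivation into $\cdot;\cdot\vdash W :: x : T@w$ and $\cdot;\, x : T@w \vdash C :: - : \one @c$ for some type $T$, where $w$ is the world at which the cut channel $x$ is introduced --- the $w$ of the statement. I would then use that $W=\pimoveoutHy{x}{y}{\secretp}{\zero}$ is a move-out term, which only $\at_R$ introduces; hence $T = A\at\secretp$ for some $A$, the premise is $\cdot;\cdot\vdash \zero :: y : A@\secretp$, and the side condition of $\at_R$ reduces to $w\prec\secretp$ (the condition on the linear context being vacuous, as that context is empty). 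Since $\zero$ can only provide a channel of type $\one$, this premise forces $A=\one$ and closes with $\one_R$. Dually, $C=\pimoveinHy{x}{y}{\secretp}{\zero}$ is a move-in term, which only $\at_L$ introduces, acting on the hypothesis $x : (\one\at\secretp)@w$; read bottom-up it rewrites that hypothesis to $y : \one @\secretp$, records the edge $w\prec\secretp$ in the signature, and leaves $\cdot;\, y : \one @\secretp \vdash \zero :: - : \one @c$, closed with $\one_L$ then $\one_R$. Finally I would read off the world constraints. The rule $\at_R$ has already demanded $w\prec\secretp$; and the well-formedness conditions that the domain-aware HyLL calculus imposes on the two sequents appearing on the $C$-side of the cut --- namely $\cdot;\, x : (\one\at\secretp)@w \vdash C :: - : \one @c$ and, after $\at_L$, $\cdot;\, y : \one @\secretp \vdash \zero :: - : \one @c$ --- force, respectively, that $w$ be reachable from $c$ in the one-step sense those rules enforce, i.e. $c\prec w$, and that $\secretp$ be likewise reachable from $c$, i.e. $c\prec\secretp$. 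I would emphasise that this last requirement is not subsumed by the first two: the edge $w\prec\secretp$ that $\at_L$ adds only yields a \emph{two-step} path $c\prec w\prec\secretp$, not the direct edge the proviso asks for. This gives the three relations.

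For the converse I would, assuming $c\prec w$, $c\prec\secretp$ and $w\prec\secretp$, simply exhibit the derivation dictated by the analysis above: the cut rule on the channel $x : (\one\at\secretp)@w$; on the left, $\at_R$ (its side conditions --- $w\prec\secretp$ and the vacuous linear-context condition --- hold) followed by $\one_R$, typing $W$; on the right, $\at_L$ (which carries no side condition and merely adds the already-present edge $w\prec\secretp$) followed by $\one_L$ and $\one_R$, typing $C$. Every proviso met along the way is one of the three hypotheses, so the derivation is valid and $\cdot;\cdot\vdash S :: - : \one @c$ is derivable.

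The step I expect to be the main obstacle --- and, in fact, the phenomenon this appendix is meant to expose --- is precisely the bookkeeping invoked in the last part of the forward direction: one must state exactly the well-formedness invariant preserved by the side-conditioned HyLL rules (in particular, whether linear-context worlds must be \emph{immediate} or merely eventual $\prec$-successors of the world of the provided channel), check that the signature extension performed by $\at_L$ does not silently discharge the constraint $c\prec\secretp$, and verify that no constraint beyond the three listed ever surfaces. The payoff is that typing $S$ already presupposes $c\prec\secretp$: the ``first migrate to the trusted domain, only afterwards use it'' discipline advertised in \cite{DBLP:conf/concur/CairesPPT19} is therefore not actually enforced by the type system once cuts are in play.
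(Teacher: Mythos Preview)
Your derivation shape analysis is correct, and the converse direction is fine. The gap is in the forward direction, and it is exactly the point the paper wants to expose.

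You write that the cut must split into $\cdot;\cdot\vdash W :: x : T@w$ and $\cdot;\, x : T@w \vdash C :: - : \one @c$, ``where $w$ is the world at which the cut channel $x$ is introduced --- the $w$ of the statement.'' But nothing in $S$, in the sequent, or in the rules forces the cut world to be the particular $w$ of the statement: that world is a free meta-variable. The paper is explicit about this and announces it as the strategy: ``We shall prove something more general \ldots\ we shall leave some meta-variables ($V_i$) in such a derivation to find the restrictions needed to complete the proof.'' With meta-variables the constraints one actually reads off are $\Omega\vdash V_1\prec^{*}V_2$ (from the cut rule) and $\Omega\vdash V_2\prec\secretp$ (from $\at R$). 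Specializing $V_1=c$, $V_2=w$ only afterwards is what yields the relations in the statement.

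Two consequences. First, the constraint coming from the cut is $c\prec^{*}V_2$, not a one-step edge; your appeal to ``well-formedness in the one-step sense'' to get $c\prec w$ and $c\prec\secretp$ is not what the side conditions of \cite{DBLP:conf/concur/CairesPPT19} give (cf.\ the paper's own tabulation of constraints). Second, and this is the headline observation of the appendix, the edge $V_2\prec\secretp$ that $\at L$ adds to the signature is \emph{already} required by $\at R$ on the other branch of the cut; hence the signature extension in $\at L$ is redundant for cut-typability. You note in passing that the extension ``does not silently discharge'' $c\prec\secretp$, but the paper's point is sharper: the extension is never new information at all. The need for the direct edge $c\prec\secretp$ is then argued separately (the paper's remark (2)), via the one-step side condition of $\at R$, not via the invariant on linear contexts you invoke.
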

\begin{proof}
We shall prove something more general. Since the only way of proceeding in the typing of $S$ is to use the rule cut (as proposed in \cite{DBLP:conf/concur/CairesPPT19}), we shall leave some meta-variables ($V_i$) in such a derivation to find  the restrictions needed to complete the proof. Also, we shall use $\Omega$ to denote the context proving 
statements of the form $v \prec w$: 
\[
\infer[Cut]{\Omega ; \cdot;\cdot \vdash \pinew{x}{(W \parallel C) :: -:\one @ V_1}}{
  \infer[\at R]{\Omega ; \cdot;\cdot \vdash W :: x:(T_w'~ \at \secretp)@V_2}{
   \deduce{\Omega ; \cdot;\cdot \vdash W' :: y: T_w' @ \secretp}{}
  }
  &
  \infer[\at L]{\Omega ; \cdot ; x : (T_w' \at \secretp) @ V_2 \vdash C :: - :1 @ V_1}{
   \deduce{\Omega, V_2 \prec \secretp ; \cdot; y : T'_w @ \secretp \vdash C' :: -:1 @ V_1}{}
  }
}
\]
According to the rules in \cite{DBLP:conf/concur/CairesPPT19}, the following must hold:

\[
\begin{array}{llr}
\Omega \vdash V_1 \prec^* V_2 & \qquad & \mbox{due to the cut-rule}\\
\Omega  \vdash V_2 \prec \secretp & \qquad & \mbox{due to $\at R$}\\
\end{array}
\]

Hence, the addition of ``$V_2 \prec \secretp$'' in $\at L$ is irrelevant (since it must be already known in $\Omega$). 

Without that extension, the initial-expansion theorem does not work (omitting the terms/processes), 

\[
\infer[\at L]{\Omega ; \cdot ; (A \at V_2)@V_1 \vdash (A \at V_2)@V_1}{
 \infer[\at R]{\Omega,V_1\prec V_2 ; \cdot ; A@V_2 \vdash (A \at V_2)@V_1}{
  \deduce{\Omega,V_1\prec V_2 ; \cdot ; A@V_2 \vdash A@V_2}{}
 }
}
\]
since $\at R$ requires that $V_1 \prec V_2$. 
\end{proof}

This proposition is  telling us that: (1) there is no need to extend the world signature in $\at L$ since, the ``new'' relation $w \prec \secretp$ must be already known (otherwise, the process $S$ is not typable).
And (2), 
adding  $w \prec \secretp$ is not enough
to guarantee that the   client can move to $\secretp$: from $c\prec w$ and $w\prec \secretp$ we  conclude $c \prec^* \secretp$ but, the side condition in $\at R$
requires $c \prec \secretp$. This explains 
the signature $\Omega$ ($c \prec w$, $c \prec \secretp$ and $w \prec \secretp$) in the above proposition.

According to 
\cite{DBLP:conf/concur/CairesPPT19}, 
the extension of the signature in $\at L$ was added in order to prove the identity expansion theorem (as shown above). However, the price to be paid is that the cut-elimination procedure does not reflect any more the intended meaning in the example above.
\section{Type system}\label{app:typesystem}
In Fig. \ref{fig-type-sys} we show the complete typing system. 
\begin{figure}[h]
\resizebox{.96\textwidth}{!}
{$\begin{array}{c}
\infer[\textit{Id}]{\seqtype{\Gamma, \TYPELOC{x}{a}{b}{A}}{\boxp{a}{\pimed{x}{a}{b}{y}}}{\TYPELOC{y}{b}{a}{A}}}{}
\quad
      \infer[\one_L]{\seqtype{\Gamma, \TYPELOC{x}{b}{a}{\one}}{P}{T}}{\seqtype{\Gamma}{P}{T}}
\qquad
      \infer[\one_R]{\seqtype{\Upsilon}{\boxp{b}{\zero}}{\TYPELOC{x}{a}{b}{\one}}}{}
      \\\\
\infer[\tensor_L]{\seqtype{\Gamma, \TYPELOC{x}{b}{a}{A \otimes B}}{\boxp{b}{\piinput{x}{a}{y}{p}}}{T}}{
\seqtype{\Gamma, {\TYPELOC{x}{b}{a}{B}}, {\TYPELOC{y}{b}{a}{A}} }{\boxp{b}{p}}{T}
}
\qquad
\infer[\tensor_R]{\seqtype{\Upsilon, \Delta,\Delta'}{\boxp{b}{\pioutput{x}{a}{y}{(p \parallel q)}}}{\TYPELOC{x}{a}{b}{A\otimes B}}}
{
 \deduce{\seqtype{\Upsilon^{\succeq b},\Delta^{\succeq b}}{\boxp{b}{p}}{{\TYPELOC{y}{a}{b}{A}}}}{}
 &
  \deduce{\seqtype{\Upsilon^{\succeq b}, \Delta'^{\succeq b}}{\boxp{b}{q}}{{\TYPELOC{x}{a}{b}{B}}}}{}
}
      \\\\
\infer[\limp_L]{\seqtype{\Upsilon,\Delta,\Delta',\TYPELOC{x}{b}{a}{A\limp B}}{\boxp{b}{\pioutput{x}{a}{y}{(p \parallel q)}}}{T}}{
 \deduce{\seqtype{\Upsilon,\Delta}{\boxp{b}{p}}{{\TYPELOC{y}{a}{b}{A}}}}{}
 &
  \deduce{\seqtype{\Upsilon,\Delta'{,\TYPELOC{x}{b}{a}{B}}}{\boxp{b}{q}}{{T}}}{}
}
 \qquad
  \infer[\limp_R]{\seqtype{\Gamma}{\boxp{b}{\piinput{x}{a}{y}{p}}}{\TYPELOC{x}{a}{b}{A\limp B}}}{
  \seqtype{\Gamma^{\succeq b},{\TYPELOC{y}{b}{a}{A}}}{\boxp{b}{p}}{{\TYPELOC{x}{a}{b}{B}}}
  }
      \\\\
   \infer[\oplus_L]{\seqtype{\Gamma, \TYPELOC{x}{b}{a}{A \oplus B}}{\boxp{b}{\picase{x}{a}{p}{q}}}{T}}{
\seqtype{\Gamma, {\TYPELOC{x}{b}{a}{A}} }{\boxp{b}{p}}{T}
&
\seqtype{\Gamma, {\TYPELOC{x}{b}{a}{B}} }{\boxp{b}{q}}{T}
}
\\\\
\infer[\oplus_{R1}]{\seqtype{\Gamma}{\boxp{b}{\piinl{x}{a}{p}}}{\TYPELOC{x}{a}{b}{A\oplus B}}}
{
 \deduce{\seqtype{\Gamma^{\succeq b}}{\boxp{b}{p}}{{\TYPELOC{x}{a}{b}{A}}}}{}
}
\qquad
\infer[\oplus_{R2}]{\seqtype{\Gamma}{\boxp{b}{\piinr{x}{a}{p}}}{\TYPELOC{x}{a}{b}{A\oplus B}}}
{
 \deduce{\seqtype{\Gamma^{\succeq b}}{\boxp{b}{p}}{{\TYPELOC{x}{a}{b}{B}}}}{}
}
\\\\
\infer[\with_{L1}]{\seqtype{\Gamma,\TYPELOC{x}{b}{a}{A\with B}}{\boxp{b}{\piinl{x}{a}{p}}}{T}}{
\seqtype{\Gamma,\TYPELOC{x}{b}{a}{A}}{\boxp{b}{p}}{T}
}
\qquad
\infer[\with_{L2}]{\seqtype{\Gamma,\TYPELOC{x}{b}{a}{A\with B}}{\boxp{b}{\piinr{x}{a}{p}}}{T}}{
\seqtype{\Gamma,\TYPELOC{x}{b}{a}{B}}{\boxp{b}{p}}{T}
}
\\\\
\infer[\with_R]{\seqtype{\Gamma}{\boxp{b}{\picase{x}{a}{p}{q}}}{\TYPELOC{x}{a}{b}{A \with B}}}{
\seqtype{\Gamma^{\succeq b}}{\boxp{b}{p}}{\TYPELOC{x}{a}{b}{A}}
&
\seqtype{\Gamma^{\succeq b}}{\boxp{b}{q}}{\TYPELOC{x}{a}{b}{B}}
}
  \\\\
\infer[\bang_L]{\seqtype{\Gamma,\TYPELOC{x}{b}{a}{\nbang{c}{A}}}{\boxp{b}{\pimovein{x}{a}{y}{c}{p}}}{T}}{
\seqtype{\Gamma,\TYPELOC{y}{c}{c}{A}}{\boxp{c}{p}}{T}
}
\qquad
\infer[\bang_R]{\seqtype{\Gamma}{\boxp{b}{\pimoveout{x}{a}{y}{c}{p}}}{\TYPELOC{x}{a}{b}{\nbang{c}A}}}{
\seqtype{\Gamma^{\succeq b}}{\boxp{c}{p}}{\TYPELOC{y}{c}{c}{A}}
}
                                                                                    
      \\\\
      \infer[\existsLoc_L]{\seqtypeA{\mathcal{A}}{\Gamma,\TYPELOC{x}{b}{a}{\existsLoc_{\Typeloc{\alpha}{a\sqcap b}}.A}}{\boxp{b}{\piwin{x}{a}{\alpha}{p}}}{T}}{
\seqtypeA{\mathcal{A},\alpha\preceq a\sqcap b}{\Gamma,{\TYPELOC{x}{b}{a}{A}}}{\boxp{b}{p}}{T}
}
\qquad
\infer[\existsLoc_R]{\seqtype{\Gamma}{\boxp{b}{\piwout{x}{a}{{l}}{p}}}{\TYPELOC{x}{a}{b}{\existsLoc _{\Typeloc{\alpha}{a\sqcap b}}.A}}}{
\seqtype{\Gamma^{\succeq b}}{\boxp{b}{p}}{{\TYPELOC{x}{a}{b}{A\{l/\alpha\}}}}
}
      \\\\
\infer[\forallLoc_L]{\seqtype{\Gamma,\TYPELOC{x}{b}{a}{\forallLoc_{\Typeloc{\alpha}{a\sqcap b}}.A}}{\boxp{b}{\piwout{x}{a}{l}{p}}}{T}}{
\seqtype{\Gamma,{\TYPELOC{x}{b}{a}{A\{l/\alpha\}}}}{\boxp{b}{p}}{T}
}
\qquad
\infer[\forallLoc_R]{\seqtypeA{\mathcal{A}}{\Gamma}{\boxp{b}{\piwin{x}{a}{\alpha}{p}}}{\TYPELOC{x}{a}{b}{\forallLoc _{\Typeloc{\alpha}{a\sqcap b}}.A}}}{
\seqtypeA{\mathcal{A},\alpha\preceq a\sqcap b}{\Gamma^{\succeq b}}{\boxp{b}{p}}{{\TYPELOC{x}{a}{b}{A}}}
}
      \\\\
\infer[copy]{\seqtype{\Gamma,\TYPELOC{x}{b}{\unb{a}}{\nbang{a}A}}{\boxp{b}{\pioutput{x}{a}{y}{p}}}{T}}{
\seqtype{\Gamma,\TYPELOC{x}{b}{\unb{a}}{\nbang{a}A}, \TYPELOC{y}{b}{a}{A}}{\boxp{b}{{p}}}{T}
}
\qquad
\infer[!^+_R]{\seqtype{\Upsilon}{\boxp{b}{!\piinput{x}{a}{y}{p}}}{\TYPELOC{x}{a}{\unb{b}}{\nbang{b}A}}}{
\seqtype{\Upsilon^{\succeq \unb{b}}}{\boxp{b}{{p}}}{\TYPELOC{y}{a}{b}{A}}
}
      \\\\
      \infer[Cut]{\seqtype{\Upsilon,\Delta,\Delta'}{\pinew{x}{(P \parallel Q)}}{T}}{
 \deduce{\seqtype{\Upsilon,\Delta}{P}{\TYPELOC{x}{b}{a}{A}}}{}
 &
 \deduce{\seqtype{\Upsilon,\Delta',\TYPELOC{x}{b}{a}{A}}{Q}{T}}{}
}
  \end{array}$}
\caption{Typing System. Since the context $\mathcal{A}$ is only modified in rules $\existsLoc_L$ and $\forallLoc_R$ , we omit it in the other rules. \label{fig-type-sys}}
\end{figure}
\end{document}